\documentclass[format=acmsmall, review=false]{acmart}

\usepackage{acm-ec-19}

\usepackage{mathtools}
\usepackage{algorithmic}
\usepackage[linesnumbered,lined,boxed,commentsnumbered]{algorithm2e}
\SetKwComment{Comment}{$\triangleright$\ }{}
\usepackage{enumitem}

\usepackage{wrapfig}
\newcommand{\AutoAdjust}[3]{\mathchoice{ \left #1 #2  \right #3}{#1 #2 #3}{#1 #2 #3}{#1 #2 #3} }
\newcommand{\Xcomment}[1]{{}}

\newcommand{\InParentheses}[1]{\AutoAdjust{(}{#1}{)}}


\newcommand{\be}{\begin{equation}}
\newcommand{\ee}{\end{equation}}
\newcommand{\argmin}{\mathop{\rm argmin}}
\newcommand{\argmax}{\mathop{\rm argmax}}

\newcommand{\bydef}{\stackrel{\bigtriangleup}{=}}

\newcommand{\eqdef}{\stackrel{\textrm{def}}{=}}
\newcommand{\eps}{\varepsilon}

\newcommand{\vect}[1]{\ensuremath{\mathbf{#1}}}
\newcommand{\R}{\mathbb{R}}


\newcommand{\agents}{\mathcal{N}}
\newcommand{\items}{\mathcal{M}}

\newcommand{\allocations}{\mathcal{X}}

\newcommand{\nashw}{\texttt{NW}}
\newcommand{\optnw}{\texttt{opt}}


\newcommand{\val}{v}

\newcommand{\vali}[1][i]{{\val_{#1}}}

\newcommand{\alloc}{X}
\newcommand{\allocs}{{\mathbf \alloc}}

\newcommand{\alloci}[1][i]{{\alloc_{#1}}}
\newcommand{\yalloc}{Y}
\newcommand{\yallocs}{{\mathbf\yalloc}}

\newcommand{\yalloci}[1][i]{{\yalloc_{#1}}}
\newcommand{\zalloc}{Z}
\newcommand{\zallocs}{{\mathbf\zalloc}}

\newcommand{\zalloci}[1][i]{{\zalloc_{#1}}}
\newcommand{\halloc}{\widehat{X}}
\newcommand{\hallocs}{\widehat{\mathbf \alloc}}

\newcommand{\halloci}[1][i]{{\halloc_{#1}}}

\newcommand{\sallocs}{{\mathbf \alloc}^{*}}

\newcommand{\salloci}[1][i]{{\alloc_{#1}^*}}

\newcommand{\bundle}{B}
\newcommand{\bundlei}[1][i]{{\bundle_{#1}}}
\newcommand{\bundles}{\vect{\bundle}}



\newcommand{\efxgraph}{\ensuremath{\texttt{EFX-feasibility-graph}}}
\newcommand{\EFXgraph}[1]{\ensuremath{\texttt{EFX-feasibility-graph}\InParentheses{#1}}}
\newcommand{\robustd}[1]{\texttt{robust-demand}\InParentheses{#1}}

\newcommand{\augmentingpath}{\ensuremath{\texttt{augmenting-path}}}
\newcommand{\AugmentingPath}[1]{\ensuremath{\texttt{augmenting-path}\InParentheses{#1}}}

\newcommand{\true}{\ensuremath{\texttt{true}}}
\newcommand{\false}{\ensuremath{\texttt{false}}}
\newcommand{\itrem}{\ensuremath{\texttt{itemremoved}}}

\newcommand{\Ocomplex}[1]{O\left(#1\right)}

\definecolor{WildStrawberry}{RGB}{255,67,164}

\newtheorem{definition}{Definition}
\newtheorem{lemma}{Lemma}
\newtheorem{claim}{Claim}

\newtheorem{theorem}{Theorem}

\theoremstyle{definition}

\theoremstyle{definition}

\sloppy

\title{Envy-freeness up to any item with high Nash welfare:\\The virtue of donating items}


\author{Ioannis Caragiannis}
\affiliation{%
	\institution{University of Patras \& CTI ``Diophantus''}}
\author{Nick Gravin}
\affiliation{%
	\institution{Shangai University of Finance and Economics, ITCS}}
\author{Xin Huang}
\affiliation{%
	\institution{The Chinese University of Hong Kong}}


\date{} 

\begin{abstract}
Several fairness concepts have been proposed recently in attempts to approximate envy-freeness in settings with indivisible goods. Among them, the concept of envy-freeness up to any item (EFX) is arguably the closest to envy-freeness. Unfortunately, EFX allocations are not known to exist except in a few special cases. We make significant progress in this direction. We show that for every instance with additive valuations, there is an EFX allocation of a subset of items with a Nash welfare that is at least half of the maximum possible Nash welfare for the original set of items. That is, after donating some items to a charity, one can distribute the remaining items in a fair way with high efficiency. This bound is proved to be best possible.
Our proof is constructive and highlights the importance of maximum Nash welfare allocation. Starting with such an allocation, our algorithm decides which items to donate and redistributes the initial bundles to the agents, eventually obtaining an allocation with the claimed efficiency guarantee. The application of our algorithm to large markets, where the valuations of an agent for every item is relatively small, yields EFX with almost optimal Nash welfare. To the best of our knowledge, this is the first use of large market assumptions in the fair division literature. We also show that our algorithm can be modified to compute, in polynomial-time, EFX allocations that approximate optimal Nash welfare within a factor of at most $2\rho$, using a $\rho$-approximate allocation on input instead of the maximum Nash welfare one.
\end{abstract}

\begin{document}

\maketitle 

\section{Introduction}
\label{sec:intro}
The agenda of fairly allocating items to agents has received much attention by the EconCS community recently. Most of the solution concepts with compelling properties such as envy-freeness and proportionality cannot always be attained when the items are indivisible, in contrast to the traditionally studied case of divisible items. Recent work put forth several  promising fairness notions that adapt and ``approximate'' the definition of envy-freeness to the case of individual items and demonstrated that allocations that satisfy some of these notions are possible, with some minimum efficiency guarantees at the same time.


A very natural adaptation of envy-freeness for settings with indivisible items was defined by Budish~\cite{budish2011combinatorial} (but is also implicit in the earlier work of Lipton et al.~\cite{lipton2004approximately}). An allocation of items is envy-free up to some item (EF1) if any possible envy of an agent for the allocation of another can be eliminated by removing {\em some} item from the envied bundle. It is well known that EF1 allocations can always be achieved, and recently Caragiannis et al.~\cite{caragiannis2016unreasonable} showed that the Nash welfare maximizing allocation is Pareto-optimal and EF1. This is indeed a success story for the Nash welfare efficiency measure defined as the geometric mean of the agent valuations, a standard and well studied objective that is particularly well aligned with fair division applications.


\begin{wrapfigure}[6]{r}{0.26\textwidth}
	\centerline{
		\begin{tabular}{l|c c c c}
			& c & r & p & n \\\hline
			Alice & $10$ & $9$ & $4$ & $6$ \\
			Bob & $10$ & $6$ & $9$ & $4$ \\
			Carol & $10$ & $4$ & $6$ & $9$
	\end{tabular}}
\end{wrapfigure}
For example, consider the problem of fairly dividing the inheritance of four items, a car, a ring, a painting, and a necklace, among three agents named Alice, Bob, and Carol. Agents have preferences for the items, which are expressed in the table at the right, where each number represents the amount in thousands of USD of how much each agent values the corresponding item. Clearly, there is no envy-free allocation in this example, because someone must get at least two items, and then one of the two other agents would have a higher total value for these items compared to what the agent got. The allocation which gives the ring to Alice, the car and the painting to Bob, and the necklace to Carol, is EF1 with the valuations of $9$, $19$, and $9$, respectively. In this allocation, Bob does not envy anyone, while Alice and Carol would prefer to swap Bob's allocation with their own: Alice values the car and the painting in Bob's bundle for the total of $14$, and Carol values them for $16$. This is still an EF1 allocation as Alice's and Carol's envy can be eliminated by ignoring their most valued item in Bob's bundle (i.e., the car) when comparing valuations.

The above allocation in our inheritance example maximizes the Nash welfare objective, which is widely considered as an important efficiency measure for allocations. It provides an appealing compromise between utilitarian and egalitarian social welfare objectives, defined as the sum of agent valuations for their allocated items and minimum valuation among all agents, respectively. Finding a Nash welfare maximizing allocation is well-known to be computationally tractable in settings with divisible items (by minimizing the convex program of Eisenberg and Gale~\cite{EG59}), where the resulting allocation also happens to be envy-free. In the case of indivisible goods, besides being EF1, Nash welfare maximizing allocations have some additional fairness properties (see e.g.~\cite{caragiannis2016unreasonable}), including a remarkable {\em scale-invariant} property that rescaling the valuation of an agent does not change the allocation that maximizes Nash welfare (see~\cite{Moulin03} for a detailed discussion).
%


On the other hand, the Nash welfare maximizing allocation from the example above can hardly be considered fair, since Bob gets more value than Alice and Carol put together in our symmetric instance\footnote{The instance is invariant under a cyclic permutation of agents' names and the respective cyclic permutation of all items except the car.}. Consider another allocation, where Alice gets the ring for the value of $9$, Bob gets the car for the value of $10$, while the painting and the necklace go to Carol with the total value of $15$. This allocation is slightly sub-optimal in terms of the Nash welfare, but it looks clearly better than the previous allocation from the fairness point of view. Indeed, observe that, inevitably, one agent must get the car; it is Bob in our case. This already creates some envy from Alice and/or Carol. So, it makes more sense to allocate the remaining three items only to them and give nothing more to Bob. More concretely, such allocations have the much more appealing property of envy-freeness up to {\em any} item (EFX), introduced by Caragiannis et al. \cite{caragiannis2016unreasonable}. Now, the envy of Alice and Bob for Carol can be eliminated by ignoring their {\em least-valued} item in Bob's bundle when comparing valuations. 
Arguably, EFX is the best fairness analog of envy-freeness for indivisible items. But even though EF1 allocations can be computed in many different ways (e.g., by a simple draft mechanism, by the local-search algorithm of Lipton et al.~\cite{lipton2004approximately}, by maximizing the Nash welfare, etc.), for EFX allocations, we do not even know whether they always exist. Moreover, asking for EFX allocations that simultaneously satisfy some efficiency guarantee is way beyond the reach of known techniques. 


\paragraph{Donation of items.}
A natural way to deal with the (hypothetical) issue of the existence of EFX allocations is to consider partial allocations, which leave some of the items unallocated. The unallocated items could then be donated to a charity or sold at auction with the profit equally distributed between the agents. Actually, in situations like the above inheritance example that arise in practice, it is usually observed that people donate some of their possessions to charities before distributing the rest to their heirs. Undoubtedly, the donation of items can completely eliminate envy, e.g., by donating all items and allocating nothing to the agents. However, although beneficial for the charity cause, this outcome cannot be considered as good for the problem at hand. Therefore, an important question to ask is: 
\begin{quote}
How can we achieve an EFX allocation of high Nash welfare by donating items smartly?
\end{quote}


\paragraph{Our results.}
We make careful use of the idea of donating items and hit many birds with one stone. We present an algorithm that takes an allocation of maximum Nash welfare as input and, after donating some of the items, outputs an allocation of the remaining ones that is EFX and has a Nash welfare that is at least half the optimal one for the original instance (that includes the donated items). As we show, this efficiency guarantee is best possible for EFX allocations, no matter whether there are donated items or not. A remarkable bonus property is that the bundle that is allocated to each agent is a subset of the bundle he/she initially gets in the maximum Nash welfare allocation. 
When the same approach is applied to {\em large market} instances, where any item's contribution is only a small fraction of the total value of any agent, our algorithm outputs almost optimal allocations. 
The algorithm can be modified to work with input allocations that have sub-optimal Nash welfare. Combined with a $\rho$-approximation algorithm for maximizing Nash welfare (such as the algorithms of Cole and Gkatzelis~\cite{cole2015approximating} or Barman et al.~\cite{barman2018finding}), the modified algorithm runs in polynomial-time and computes an EFX allocation that is $2\rho$-approximation to the maximum Nash welfare.

\paragraph{Techniques.}
We remark that our algorithms are purely combinatorial. This comes in contrast to the heavily used convex programming and rounding techniques for Nash welfare maximization as well as the use of duality techniques and item prices for the computation of fair allocations. 

Our algorithms do not reallocate any items between the bundles in the initial allocation but permanently remove items from the instance. This removal operation gives us fine control of the agents' preferences. For example, if an agent is ``happy'' with a bundle from which we do not remove an item, then she stays happy with it after the removal of an item. At a high level, we always seek to find a complete EFX feasible matching of agents with bundles and carefully remove an item so that the size of the matching never decreases. From the Nash welfare maximality property, we derive that no bundle can lose more than a certain fraction of its value for the agent who had it in the initial Nash welfare maximizing allocation, yielding the desired Nash welfare guarantee. 
For large markets, we are able to obtain our considerably improved result by proving a number-theoretic lemma 
using Karamata's inequality for the logarithm function.

\paragraph{Significance.}
We employ the natural and practical operation of donating/removing items to resolve the challenging issues related to the EFX solution concept. Our approach contrasts with the recent attempts to mitigate such challenges by considering approximate versions of EFX (e.g., in the papers by Plaut and Roughgarden~\cite{plaut2018almost} and Amanatidis et al.~\cite{ABM18}). Instead, we keep the precise definition of EFX and use approximation only for the efficiency guarantee which is unavoidable. This approach of considering item donations could be useful in the study of the interplay of other fairness notions with efficiency. Furthermore, to the best of our knowledge, this is the first time that large market assumptions are considered in the fair division literature.

\subsection{Related work}
\label{sec:related}
\paragraph{Fair division with indivisible items.} The rigorous study of fair division with divisible items has a long history; it origininates from the work of Steinhaus~\cite{steinhaus1948problem} in the 40's and includes very recent breakthroughs like the envy-free cake-cutting protocol of Aziz and McKenzie~\cite{AM16}. In contrast, fairly allocating indivisible items among agents has not been as popular until very recently. Most probably, the reason for this delay is that the beautiful fairness notions of envy-freeness and proportionality, that have received so much attention in the literature on divisible items, are rarely achievable with indivisible items. To give an embarassing example, just consider a single item and two agents with identical valuations for the item.

The recent interest for the indivisible items setting was sparked with the definition of fairness notions that approximate envy-freeness and proportionality. In particular, the notion of maxmin fair share (MMS), defined by Budish~\cite{budish2011combinatorial} and Bouveret and Lema\^{i}tre~\cite{BL16} can be thought of as an approximate version of proportionality and has received much attention recently, e.g., see~\cite{procaccia2014fair,AMNS17,barman2017approximation,ghodsi2018fair}.
Besides the concepts of EF1 and EFX mentioned above, approximate versions of envy-freeness include epistemic envy-freeness~\cite{ABCGL18} or notions that require the minimization of the envy-ratio~\cite{lipton2004approximately} and degree of envy~\cite{CEEM07,NR14} objectives.

As mentioned earlier, EF1 is easy to achieve with several different methods. What is really challenging is to achieve EF1 and Pareto-optimality simultaneously. This was proved to be possible in \cite{caragiannis2016unreasonable}, by an allocation of maximum Nash welfare. The popular website Spliddit \cite{GP14}, available at \url{www.spliddit.org}, returns such allocations as part of its ``Divide goods'' application. Following \cite{caragiannis2016unreasonable}, Barman et al.~\cite{barman2018finding} investigate whether EF1 and Pareto-optimal allocations can be computed efficiently and present a pseudo-polynomial-time algorithm.

Unlike EF1, the existence of EFX allocations is still a mystery, even for three agents with additive valuations. Plaut and Roughgarden \cite{plaut2018almost} prove the existence of EFX allocations for setting with two agents only or with more agents and identical valuations. In addition, they present an algorithm for computing an $1/2$-EFX allocation, where the value of each agent from her bundle is at least half of what the EFX property requires it to be. Caragiannis et al.~\cite{caragiannis2016unreasonable} and Amanatidis et al.~\cite{ABM18} consider different approximations of EFX (namely, approximations of the strongly related pairwise MMS fairness concept). 

\paragraph{Nash welfare.} The history of Nash social welfare (or, simply, Nash welfare) dates back to the 50's, where it was used in bargaining problems~\cite{nash1950bargaining}. In fair division, it is considered as a good measure to balance fairness with efficiency. In particular, in settings with divisible items, maximizing the Nash welfare can be done in polynomial-time using the convex program of Eisenberg and Gale~\cite{EG59}, which also leads to envy-free allocations. Unfortunately, such nice properties dissappear in the indivisible items setting as the problem of computing an allocation of maximum Nash welfare becomes APX-hard~\cite{lee2017apx} (see also \cite{NNRR14} for a weaker hardness result). Still, as the Nash welfare is a very important efficiency measure, several constant-approximation algorithms have been proposed recently. The first such algorithm was due to Gkatzelis and Cole~\cite{cole2015approximating} and approximates the Nash welfare within a factor of $2.89$. This was improved to $2$ by Cole et al.~\cite{cole2017convex}, and further to $1.45$ by  Barman et al.~\cite{barman2018finding}.  See also~\cite{anari2018nash,anari2017nash,garg2018approximating} for approximation algorithms in more general settings (with non-additive valuations). With the exception of \cite{barman2018finding} that uses item pricing techniques, rounding of convex programming relaxations is the main algorithmic tool in this line of research.

\paragraph{Other related papers.} Our main positive result bounds the efficiency gap of the best EFX allocation in terms of Nash welfare. As such, this is a {\em price of fairness} result, a notion that was introduced independently by Bertsimas et al.~\cite{bertsimas2011price} and Caragiannis et al.~\cite{caragiannis2012efficiency}. Also, the general structure of our algorithms has conceptual similarities (but also many technical differences) to algorithms for computing combinatorial walrasian equilibria~\cite{FGL16}, where items are packaged into groups instead of being donated. Finally, we remark that large market assumptions have been considered in a few different areas recently, such as in the study of the price of anarchy of large games~\cite{cole2016large,feldman2016price}, in budget-feasible mechanisms~\cite{AGN14}, and in optimizing over (very different from ours) fairness constraints~\cite{BS15}.

\subsection{Roadmap}
The rest of the paper is structured as follows. We begin with formal definitions in Section~\ref{sec:prelim}. Our counter-example showing that the Nash welfare of EFX allocations can be far from optimal is given in Section~\ref{sec:example_lb}. The description and analysis of our main positive result is presented in Section~\ref{sec:algorithm}. Our modifications that allow the algorithm work with allocations of sub-optimal Nash welfare on input are given in Section~\ref{sec:poly-time-algorithm}. The large market assumption is considered in Section~\ref{sec:large_market}. We conclude in Section~\ref{sec:open} with a short discussion and open problems.

\section{Preliminaries}
\label{sec:prelim}

Let $\agents$ be the set of $n$ agents and $\items$ be the set of $m$ indivisible items. We enumerate agents and items from $1$ through $n$ and through $m$, respectively. We will often refer to the set of agents $\agents$ as $[n]$ and to the set of items $\items$ as $[m]$, where $[k]\eqdef\{1,2,\ldots,k\}$ for any positive integer number $k$. Each agent $i\in\agents$ is endowed with a valuation function $\vali:\items\to \R_{+}$, where $\R_+$ is the set of positive real numbers. Valuations are additive; by slight abuse of notation, agent $i$ has a value of $\vali(S)\eqdef\sum_{g\in S}\vali(g)$ for any set $S\subseteq\items$. 

We refer to a set of items as a {\em bundle} and extensively consider partitions of items into bundles. Formally, a {\em partition} of a set of items $S\subseteq\items$ is a set of $n$ disjoint bundles $\bundles=\left\{\bundlei[1],\ldots,\bundlei[n]\right\}$ such that $\bundlei\cap\bundlei[j]=\emptyset$ and $\bigcup_{i\in[n]} \bundlei=S$. 
An {\em allocation} $\allocs=\left(\alloci[1],\ldots,\alloci[n]\right)$ of the items in $S\subseteq\items$ to the agents of $\agents$ is an ordered partition of the item set $S$, where each agent $i\in[n]$ gets the bundle $\alloci$. 
We denote the set of all possible allocations of $S$ to the agents of $\agents$ as $\allocations(S)$. 
From now on, the term allocation implies an allocation to the agents of $\agents$.

An allocation $\allocs=\left(\alloci[1],\ldots,\alloci[n]\right)$ is called
\begin{itemize}[noitemsep,nolistsep]
\item {\em envy-free} (\textbf{EF}) if $\vali(\alloci)\ge\vali(\alloci[j])$ for any $i,j\in\agents$; 
\item {\em envy-free up to one item} (\textbf{EF1}) if $\vali(\alloci)\ge\min_{g\in\alloci[j]}\vali(\alloci[j]-\{g\})$ for any $i,j\in\agents$; 
\item {\em envy-free up to any item} (\textbf{EFX}) if $\vali(\alloci)\ge\max_{g\in\alloci[j]}\vali(\alloci[j]-\{g\})$ for any $i,j\in\agents$\footnote{If the set $\alloci[j]$ is empty we assume that $\min_{g\in\alloci[j]}\vali(\alloci[j]-\{g\})=\max_{g\in\alloci[j]}\vali(\alloci[j]-\{g\})=0$. We remark that our definition is slightly different than the one in \cite{caragiannis2016unreasonable}, where the maximum is taken over the items of bundle $\alloci[j]$ for which agent $i$ has strictly positive valuation. However, our positive results are only stronger in this way, and in the proof of Theorem~\ref{thm:lower-bound}, all agent valuations are non-zero.}.
\end{itemize}

We measure the efficiency of an allocation $\allocs=(\alloci[1],\ldots,\alloci[n])$ using its {\em Nash social welfare} (or, simply {\em Nash welfare}) $\nashw(\allocs)$, which is defined as the geometric mean of $\{\vali(\alloci)\}_{i=1}^{n}$. For any set of items $S\subseteq\items$, we denote by $\optnw(S)$ the maximum Nash welfare over all allocations of $S$. Formally,
\[
\nashw(\allocs)\eqdef\left(\prod_{i\in[n]}~\vali(\alloci)\right)^{1/n},\quad\quad\quad
\optnw(S)\eqdef\max_{\allocs\in\allocations(S)}\nashw(\allocs).
\]
Our goal is to find an EFX allocation with as large Nash welfare as possible. We say that the allocation $\allocs$ of $S\subseteq\items$ is {\em $\alpha$-efficient} if $\alpha\cdot\nashw(\allocs)\ge\optnw(\items)$. An allocation $\allocs$ is called {\em Pareto-optimal} if there is no other allocation $\yallocs$ with at least as high value for every agent $i$, i.e., $\vali(\yalloci)\ge\vali(\alloci)$, and strictly higher value for one of the agents. In particular, the Nash welfare maximizing allocation is Pareto-optimal.

We now introduce {\em large markets}. These are allocation problems, in which the valuation of every agent for any item is a small fraction of her total value for all items of $\items$.

\begin{definition}[Large Market]
\label{def:large_market}
An allocation problem satisfies the large market condition with parameter $\eps\in (0,1]$ if $\vali(g)\le\frac{\eps}{n}\vali(\items)$ for every agent $i\in[n]$ and any item $g\in\items$.
\end{definition}
In Section~\ref{sec:large_market}, we actually use the following weaker large market condition, which is defined with respect to any allocation of optimal Nash welfare.
\begin{definition}[Large market w.r.t. optimal allocation $\allocs$]
\label{def:large_market_allocation}
An allocation $\allocs$ of optimal Nash welfare satisfies the large market condition with parameter $\eps>0$ if $\vali(g)\le \eps\cdot\vali(\alloci)$ for every agent $i\in[n]$ and any item $g\in\alloci$.
\end{definition}

\begin{claim}
\label{cl:large_market_equivalence}
The large market condition with parameter $\eps$ implies the large market condition for an optimal Nash welfare allocation with parameter $\eps'=\eps/(1-\frac{n-1}{n}\eps)$.
\end{claim}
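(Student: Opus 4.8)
The plan is to derive a lower bound on every bundle value $\vali(\alloci)$ in a maximum Nash welfare allocation and then feed in the per-item bound of Definition~\ref{def:large_market}. The tool I would use is the known fact that a maximum Nash welfare allocation is EF1 (Caragiannis et al.~\cite{caragiannis2016unreasonable}), applied to the given optimal $\allocs$. (Under the standard assumption that every agent values $\items$ positively, Definition~\ref{def:large_market} with $\eps\le1$ forces $\optnw(\items)>0$: if some agent valued at most $n-1$ items positively, summing $\vali(g)\le\frac{\eps}{n}\vali(\items)$ over those items would give $\vali(\items)<\vali(\items)$; so every agent positively values at least $n$ items and Hall's theorem supplies an allocation of positive Nash welfare, whence $\allocs$ assigns every agent a positive-value bundle and the EF1 guarantee applies with no tie-breaking caveat.)

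Set $M:=\max_{g\in\items}\vali(g)$, so that $M\le\frac{\eps}{n}\vali(\items)$ by Definition~\ref{def:large_market}. EF1 for the ordered pair $(i,j)$ gives $\vali(\alloci)\ge\vali(\alloci[j])-\max_{h\in\alloci[j]}\vali(h)\ge\vali(\alloci[j])-M$ for every $j$; the key point is that the quantity subtracted can be taken to be the global maximum $M$ in all $n$ comparisons. Summing the $n-1$ inequalities with $j\ne i$ and using $\sum_j\vali(\alloci[j])=\vali(\items)$ gives $(n-1)\vali(\alloci)\ge\vali(\items)-\vali(\alloci)-(n-1)M$, i.e.
\[
\vali(\alloci)\;\ge\;\frac{\vali(\items)-(n-1)M}{n}\;\ge\;\frac{\vali(\items)}{n}\Big(1-\tfrac{n-1}{n}\eps\Big).
\]
Since $\eps\in(0,1]$ we have $\tfrac{n-1}{n}\eps<1$, so this bound is positive and $\eps'=\eps/(1-\tfrac{n-1}{n}\eps)$ is well defined. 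Finally, for any $g\in\alloci$,
\[
\vali(g)\;\le\;M\;\le\;\frac{\eps}{n}\vali(\items)\;=\;\eps'\cdot\frac{\vali(\items)}{n}\Big(1-\tfrac{n-1}{n}\eps\Big)\;\le\;\eps'\,\vali(\alloci),
\]
which is precisely the condition of Definition~\ref{def:large_market_allocation} with parameter $\eps'$.

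There is no deep obstacle here. The two design choices that make it work are routing through the EF1 property (rather than arguing directly from Nash-welfare optimality, which is technically messier) and subtracting the same global maximizer $M$ in every comparison, so that the $n-1$ EF1 inequalities add up cleanly; after that it is just bookkeeping to check that the constants collapse to exactly $\eps'$. The only genuine subtlety is what "allocation of optimal Nash welfare" should mean on degenerate instances, which the restriction $\eps\le1$ in Definition~\ref{def:large_market} (together with every agent valuing $\items$ positively) conveniently disposes of, as noted above.
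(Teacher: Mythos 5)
Your proof is correct and takes essentially the same route as the paper's: both invoke the EF1 property of a maximum Nash welfare allocation, sum the $n-1$ EF1 inequalities (together with the identity for $j=i$) to lower-bound $\vali(\alloci)$ by $\frac{1}{n}\vali(\items)\bigl(1-\tfrac{n-1}{n}\eps\bigr)$, and then apply the per-item bound $\vali(g)\le\tfrac{\eps}{n}\vali(\items)$. Your preliminary paragraph on well-definedness is extra care the paper skips (valuations there are assumed strictly positive), and your use of a single global maximum $M$ rather than the item-by-item removals $g_{ij}$ is a cosmetic variation that collapses to the same constants.
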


\begin{proof}
We use the fact that a Nash welfare maximizing allocation $\allocs$ is also EF1 \cite{caragiannis2016unreasonable}. Let us fix any agent $i\in\agents$. The EF1 condition for agent $i$ implies that $\forall j\neq i$, $\exists g_{ij}\in\alloci[j]$ such that $\vali(\alloci)\ge\vali(\alloci[j]\setminus\{g_{ij}\})$. By summing these inequalities over all $j\neq i$ and the equality $\vali(\alloci)=\vali(\alloci)$, we get
\[
n\cdot\vali(\alloci)\ge\vali(\alloci)+\sum_{j\neq i}\vali(\alloci[j]\setminus\{g_{ij}\})=\vali(\items)-\sum_{j\neq i}\vali(g_{ij})\ge\vali(\items)-(n-1)\cdot\frac{\eps}{n}\vali(\items).
\] 
Therefore, for any agent $i$ and item $g\in\alloci$ we have
\[
\vali(g)\le\frac{\eps}{n}\vali(\items)\le \frac{\eps}{n} n\cdot\vali(\alloci)\left(1-\frac{n-1}{n}\eps\right)^{-1}=\eps'\cdot\vali(\alloci).\qedhere
\]
\end{proof}



\section{A lower bound on the Nash welfare of EFX allocations}
\label{sec:example_lb}
Before presenting our positive results, we give an allocation problem where almost half of the optimal Nash welfare must be sacrificed in order to achieve EFX, either for the original set of items or for any subset of them.


 
\begin{theorem}\label{thm:lower-bound}
For any positive integer $n$ and $\eps>0$, there is an allocation problem with $n$ agents and set of items $\items$, such that any EFX allocation $\allocs$ of any subset of items $S\subseteq\items$ satisfies $\left(2^{1-1/n}-2\eps\right) \cdot\nashw(\allocs) \leq \optnw(\items)$.
\end{theorem}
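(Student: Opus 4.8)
The plan is to construct a single instance exhibiting two opposite phenomena: it admits an allocation of Nash welfare essentially $2^{1-1/n}$, so that $\optnw(\items)\ge 2^{1-1/n}-2\eps$, while \emph{every} EFX allocation of \emph{every} subset of items has Nash welfare strictly less than $1$. The two facts combine at once: for any such allocation $\allocs$ we get $\bigl(2^{1-1/n}-2\eps\bigr)\cdot\nashw(\allocs)<2^{1-1/n}-2\eps\le\optnw(\items)$.

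Concretely, fix small parameters $0<\gamma<\delta<1$ with $(1-\delta)+(n-1)\gamma<1$; I will later take $\delta\le\eps/2$. Introduce $n$ agents and $2n-1$ items: $n-1$ \emph{dummy} items $g_1,\dots,g_{n-1}$ valued at $1$ by every agent, and $n$ \emph{personal} items $r_1,\dots,r_n$, where agent $i$ values $r_i$ at $1-\delta$ and every other $r_j$ at $\gamma$ (all valuations strictly positive). For the lower bound on $\optnw(\items)$ I would consider the allocation giving $\{g_k,r_k\}$ to agent $k$ for each $k\le n-1$ and $\{r_n\}$ to agent $n$; its Nash welfare equals $\bigl((2-\delta)^{n-1}(1-\delta)\bigr)^{1/n}$, and a routine estimate (e.g. via $\ln(2-\delta)\ge\ln 2-\delta$) shows this is at least $2^{1-1/n}-2\eps$ once $\delta$ is small enough.

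The heart of the argument is the claim that, in any EFX allocation $\allocs$ of any $S\subseteq\items$, no agent's bundle contains a dummy item together with at least one other item. Suppose some agent $k$ has a bundle $X_k$ with $g_\ell\in X_k$ and $|X_k|\ge 2$. Fix an arbitrary other agent $i$ and remove from $X_k$ the item that $i$ values least; the remaining set still has $v_i$-value at least $1$, since it either still contains $g_\ell$ (worth $1$ to $i$), or -- if $g_\ell$ happened to be $i$'s least-valued item in $X_k$ -- it still contains another dummy item (a personal item is worth less than $1$ to $i$, hence cannot have been more expensive to $i$ than $g_\ell$). The EFX inequality for the pair $(i,k)$ therefore forces $v_i(X_i)\ge 1$. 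Since also $v_k(X_k)>1$, all $n$ agents end up with value at least $1$. However, a bundle made up solely of personal items is worth at most $(1-\delta)+(n-1)\gamma<1$ to its owner, so an agent with value at least $1$ must hold a dummy item; as the bundles are pairwise disjoint and there are only $n-1$ dummy items, at most $n-1$ agents can hold one -- a contradiction.

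Granting the claim, every dummy item is either unallocated or the sole item in its bundle, so each agent's value is either exactly $1$ (a single dummy) or strictly below $1$ (only personal items, or nothing). Because there are only $n-1$ dummies, at least one agent is in the latter case, whence $\prod_i v_i(X_i)<1$ and $\nashw(\allocs)<1$, as needed. I expect the claim to be the main obstacle: it rests on the dummy items being \emph{uniformly} worth exactly $1$ -- which turns any strictly larger dummy-containing bundle into something envied up to any item by every agent whose own value is below $1$ -- combined with the intentional shortage of one dummy relative to the number of agents. Beyond this, the only delicate points are placing the perturbed personal-item values in the correct ranges (each below $1$, their total below $1$, and $r_i$ strictly more valuable to agent $i$ than to the others) and the elementary $\optnw$ estimate.
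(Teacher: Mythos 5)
Your proposal is correct and takes essentially the same approach as the paper: you use the same construction of $n-1$ universally valued items plus $n$ personal items, establish the lower bound on $\optnw$ with the same allocation, and use the same pigeonhole observation (only $n-1$ ``large'' items for $n$ agents) combined with the EFX condition ruling out any bundle that pairs a large item with anything else. The only cosmetic difference is in the organization of the upper-bound argument: the paper starts by fixing an agent $j$ who receives no large item and shows directly that no other agent can hold a large item plus an extra, whereas you argue by contradiction that if any agent held such a bundle then all $n$ agents would be forced to value at least $1$, which is impossible since only $n-1$ can own a dummy; both versions yield $\nashw(\allocs)\le 1$ and the theorem follows identically.
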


\begin{proof} Consider the allocation problem with $n$ agents, set of items  $\items\eqdef\{1,...,2n-1\}$, and valuation function $\vali:\items\to\R_+$ of agent $i\in[n]$ that is defined as follows:
\[
 \vali(g) \eqdef 
\begin{cases}
1  & \text{when } g\in\{1,...,n-1\},\\
1-\eps & \text{when } g=2n-i,\\
^\eps\!/_{2n}, & \text{ otherwise.}
\end{cases}
\]
The allocation $\sallocs=(\salloci[1], ..., \salloci[n])$ with $\salloci=\{i, 2n-i\}$ for $i\in[n-1]$, and $\salloci[n]=\{n\}$ has Nash welfare $\nashw(\sallocs)=\sqrt[n]{(2-\eps)^{n-1}(1-\eps)}$. 
Hence, the optimal Nash welfare is 
\begin{align*}
\optnw(\items)&\geq \sqrt[n]{(2-\eps)^{n-1}(1-\eps)}=2^{1-1/n}\cdot\left(1-\frac{\eps}{2}\right)^{(n-1)/n}(1-\eps)^{1/n}\\
&> 2^{1-1/n}\left(1-\frac{\eps}{2}\cdot\frac{n-1}{n}\right)\left(1-\frac{\eps}{n}\right)>2^{1-1/n}-2\eps.
\end{align*}
The second inequality follows by Bernoulli inequality $(1-x)^r> 1- xr$ for $0<r<1$. 

Now consider any EFX allocation $\allocs=(\alloci[1],...,\alloci[n])$ of a set of items $S\subseteq \items$. By the pigeonhole principle, there must be some agent $j$ who does not receive any of the first $n-1$ items. Then, agent $j$'s value is at most $\vali[j](\alloci[j])\le\sum_{g=n}^{2n-1}\vali(g)<1-\frac{\eps}{2}.$ It means that, if any other agent $i\neq j$ receives a large item $g\in[n-1]$, then her allocation must contain only that item, i.e., $\alloci=\{g\}$. Indeed, if there is an item $g'$ different than $g$ so that $g$ and $g'$ belong to bundle $\alloci$, then the EFX condition is violated, as $\vali[j](\alloci[j])<1-\eps/2<1\le\vali[j](\alloci\setminus g')$. This means that no agent gets value higher than $1$: either agent $i$ gets a large item $g\in[n-1]$ and, subsequently, $\vali(\alloci)=\vali(g)=1$, or she gets (a subset of) items $n, n+1, ..., 2n-1$ and has value $\vali(\alloci)\le\sum_{g=n}^{2n-1}\vali(g)<1-\eps/2$. Therefore, $\nashw(\allocs)\le 1$ which, together with the inequality on $\optnw(\items)$ above, completes the proof.
\end{proof}

\section{Main algorithm}
\label{sec:algorithm}
We now present our algorithm for computing an EFX allocation of some of the items to the agents. Together with the allocation problem (a set of items $\items$ and $n$ agents with valuations for the items in $\items$), the algorithm receives as input a Nash welfare maximizing allocation $\allocs$ for it. It tries to match as many bundles from the initial allocation to agents as possible and repeatedly removes items from the bundles as long as this matching does not correspond to an EFX allocation. The algorithm is guaranteed to output an EFX allocation $\yallocs$ with Nash welfare which is at least half of the original one. More precisely, we will show that $2^{\frac{n-1}{n}}\cdot\nashw(\yallocs)\ge\nashw(\allocs)=\optnw(\items)$.
%

Before presenting our algorithm in detail, we introduce several useful notions, giving forward pointers to the lines of the pseudocode (see Algorithm~\ref{alg:1}) where these notions are used. During its execution, the algorithm maintains a disjoint set $\zallocs$ of $n$ bundles. Initially (see Line 1), $\zallocs$ consists of the bundles in $\allocs$. The algorithm progresses in rounds. In each round (defined by an execution of Lines 3-12 of the pseudocode in the ``repeat-until'' loop), it tries to compute an EFX allocation (with particular properties) by assigning the bundles of $\zallocs$ to the agents. Whenever this is not possible, it removes an item from a bundle of $\zallocs$ and proceeds to the next round. When a bundle of $\zallocs$ misses an item, it becomes a {\em touched} bundle. 

In order to compute the EFX allocation in each round, the algorithm uses the notion of the {\em EFX feasibility graph}, which is defined as follows. The EFX feasibility graph is a bipartite graph $G=(\agents \cup \zallocs,E)$ between two sets of vertices: the first part contains all agents $\agents$, each agent as a vertex; the second part contains the bundles in $\zallocs$, each bundle $\zalloci$ as a vertex. Edges of $G$ are defined as
\[
E(G)\eqdef\{(i,\zalloci[j]) | \text{ (i) } \zalloci[j] \text{ is EFX feasible for } i; \text{ (ii) } \vali(\zalloci[j]) > \vali(\zalloci) \text{ if } i\neq j\}.
\]
In particular, EFX feasibility in condition (i) requires that $\vali(\zalloci[j])\ge\max_{g\in\zalloci[k]}\vali(\zalloci[k]-\{g\})$ for all $k\in[n]$. Condition (ii) in the definition of $E(G)$ restricts the space of possible allocations and expresses our preference for matching agents to their initial bundles whenever possible. The call of the subroutine \efxgraph\ in Line 3 of Algorithm~\ref{alg:1} builds the EFX feasibility graph for the current set of bundles $\zallocs$.

In each round, the algorithm computes a matching in the EFX feasibility graph. This matching can naturally be thought of as a partial allocation of the current bundles to the agents. If the matching is perfect, the corresponding (complete) allocation is returned and the algorithm terminates. Otherwise, an item is removed from some bundle and the algorithm proceeds to the next round. The matching computed in each round has particular properties that guarantee that the algorithm makes progress during its execution. In particular, all touched bundles are matched (condition (a) in Line 6). Under this condition, the matching contains the maximum number of edges of the form $(i,\zalloci)$ (condition (b) in Line 7). And, under these two conditions, the size of the matching is maximized (condition (c) in Line 8).

The algorithm finds the {\em robust demand} bundle of an arbitrary unmatched agent $i$ (Line 10) if the current matching is not perfect. The robust demand for agent $i$ is defined as any bundle
\[
\robustd{i,\zallocs}\eqdef\zalloci[j]\in\argmax_{\zalloc\in\zallocs}\{\max_{c\in \zalloc}\vali(\zalloc\setminus\{c\})\},
\]
breaking ties arbitrarily.
Then, the algorithm updates the set of bundles $\zallocs$ by removing from the robust-demand bundle $\zalloci[j]$ the least valued item of agent $i$ (Line 11). The definition of the robust demand guarantees that the bundle $\zalloci[j]$ will be EFX feasible for agent $i$ in the next round. In addition, as we will see, the edge $(i,\zalloci[j])$ will belong to the EFX feasibility graph in the next iteration.

%

\IncMargin{0.6em}
\begin{algorithm}[ht]
\KwIn{allocation $\allocs=(\alloci[1],\ldots,\alloci[n])$ of $\items$, such that $\nashw(\allocs)=\optnw(\items)$.}
\KwOut{allocation $\yallocs=(\yalloci[1],\ldots,\yalloci[n])$ of a set $S\subseteq\items$.}
Let $\zallocs=(\zalloci[1],\ldots,\zalloci[n])\gets(\alloci[1],\ldots,\alloci[n])$ be an ordered partition of $\items$.\\
\Repeat{$|M|=n$}
{
$G\gets\EFXgraph{[n],\zallocs}$;\\
Let $T$ be the set of touched bundles in $\zallocs$;\\
Let $M$ be a matching in $G$ such that\\
\quad\textbf{(a)}~~ All bundles in $T$ are matched in $M$,\\
\quad\textbf{(b)}~~ Under (a), $|M\cap\left\{(i,\zalloci): i\in[n]\right\}|$ is maximized, and\\
\quad\textbf{(c)}~~ Under (a) and (b), $|M|$ is maximized;\\
\If{$\exists i\in[n]$ not matched in $M$}
	{
		$\zalloci[j]\gets\robustd{i,\zallocs}$;\\
		$\zallocs \gets (\zalloci[1], \ldots, \zalloci[j-1],\zalloci[j]\setminus \{c\}, \zalloci[j+1], \ldots, \zalloci[n])$, where $c\in\{\argmin_{g\in\zalloci[j]}\vali(g)\}$;
	}
}
\KwRet{$\yallocs=\left(M(1),\ldots,M(n)\right)$} \Comment*{It must be the case $\{M(i)=\zalloci\}_{i\in [n]}$.}
\caption{Computes an EFX allocation of high Nash welfare.}
\label{alg:1}
\end{algorithm}
\DecMargin{0.6em}


Algorithm~\ref{alg:1} will eventually terminate and output some allocation $\yallocs$, since in every iteration of the repeat loop (except the last time $\bar{t}$) it removes an item from $\items$. Next, we need to argue that in every iteration $t\in\{0,\ldots,\bar{t}\}$ of the repeat loop the algorithm is correct, i.e., that in Line 6 of Algorithm~\ref{alg:1}, all touched bundles from $T(t)$ can be simultaneously matched in graph $G(t)\eqdef\EFXgraph{[n],\zallocs(t)}$\footnote{In the $t^{th}$ iteration, $\zallocs(t)$ represents $\zallocs$ before the execution of the if statement in Lines 9-12 of the algorithm.}. This is proved in Lemma~\ref{lem-exist-matching}; the proof uses several useful observations about the structure of the EFX feasibility graph $G(t)$.  
%
\begin{lemma}\label{lem-exist-matching}
All bundles in $T(t)$ can be simultaneously matched to agents in $G(t)$ at any iteration $t$.
\end{lemma}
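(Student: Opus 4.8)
The plan is to prove the lemma by induction on the iteration $t$, maintaining a stronger invariant that is easier to propagate. At $t=0$ there are no touched bundles, so the claim is vacuous. For the inductive step, I would assume that at iteration $t$ there is a matching $M(t)$ satisfying conditions (a)--(c), and show that at iteration $t+1$ all touched bundles from $T(t+1)$ can still be matched. The key observation to set up is that between rounds $t$ and $t+1$ the only change is that the robust-demand bundle $\zalloci[j]$ of some unmatched agent $i$ loses its least-valued item $c$. So $T(t+1)$ is either $T(t)$ or $T(t)\cup\{\zalloci[j]\}$, and I must exhibit a matching in $G(t+1)$ covering all of these.

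The heart of the argument is to understand how the edge set changes when we remove $c$ from $\zalloci[j]$. First, removing an item from a bundle can only make that bundle \emph{more} EFX-feasible for everyone (its value drops, so the ``$\max_{g}$'' test against it from other bundles becomes easier to pass), and it does not affect EFX-feasibility comparisons among the untouched bundles. The subtle point is condition (ii): lowering $\vali[k](\zalloci[j])$ could destroy an edge $(k,\zalloci[j])$ for some agent $k\neq j$ if $\vali[k](\zalloci[j])$ drops to or below $\vali[k](\zalloci[k])$ — but by the definition of robust demand, $\zalloci[j]$ was the bundle maximizing $\max_{c'\in\zalloc}\vali(\zalloc\setminus\{c'\})$ for agent $i$, and removing $i$'s least-valued item guarantees $(i,\zalloci[j])\in E(G(t+1))$ with $i$ previously unmatched. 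So I would argue: take $M(t)$, which covered all of $T(t)$; if $T(t+1)=T(t)$ we are almost done modulo repairing any edge of $M(t)$ that got deleted; if $\zalloci[j]$ became newly touched, we need to additionally cover it, and the fresh edge $(i,\zalloci[j])$ plus the fact that $i$ was unmatched in $M(t)$ gives us an augmenting structure. Concretely, I would add the edge $(i,\zalloci[j])$; if $\zalloci[j]$ was unmatched in $M(t)$ this directly extends the matching; if it was matched to some $k$, then $k$ is freed and we must re-match $k$ — here I would invoke condition (c) (maximality of $|M(t)|$) or (b), together with the observation that no previously-present edge incident to an \emph{untouched} bundle other than $\zalloci[j]$ was deleted, to run an alternating-path argument showing $k$ can be re-matched without uncovering any touched bundle.

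The main obstacle I anticipate is the bookkeeping around condition (ii) and the interaction between the three optimality conditions (a)--(c): I need to ensure that when I re-route the matching to cover a newly-touched bundle, I do not drop some other touched bundle, and that the ``maximum number of $(i,\zalloci)$-edges'' condition is not what forces a touched bundle out. I expect the clean way through is to prove an auxiliary structural claim first: that every touched bundle $\zalloci[k]$ has the edge $(k,\zalloci[k])$ available in $G(t+1)$ — i.e., an agent never ``abandons'' a bundle he himself has been shrinking — because each item removal from a bundle was the removal of the \emph{least}-valued item of the agent who triggered it, and the Nash-welfare/EF1 structure of the initial allocation controls how much value any bundle can lose. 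If that claim holds, then the touched bundles can be matched to ``their own'' agents greedily, and the remaining (untouched) bundles and unmatched agents form a subproblem where a max-matching trivially covers all the touched ones, since the touched ones are already handled. I would then derive conditions (a)--(c) as consequences of taking, among all matchings covering $T(t+1)$, one that is lexicographically best in $(|M\cap\{(i,\zalloci)\}|,|M|)$.
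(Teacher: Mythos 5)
Your overall shape is right --- induction on $t$, establish that the new edge $(i,\zalloci[j])$ exists in $G(t+1)$, and modify $M(t)$ locally --- and this matches the paper's strategy. But there are concrete gaps that would keep you from completing the argument along the road you sketch.

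\textbf{The proposed auxiliary claim is false and a dead end.} You propose to ``prove an auxiliary structural claim first: that every touched bundle $\zalloci[k]$ has the edge $(k,\zalloci[k])$ available.'' This is not true: when the algorithm removes an item from $\zalloci[j^*]$, it removes the least-valued item \emph{of the triggering agent $i^*$}, not of the owner $j^*$. Shrinking $\zalloci[j^*]$ decreases $\vali[j^*](\zalloci[j^*])$, which can destroy EFX-feasibility of $\zalloci[j^*]$ for its own owner $j^*$ --- exactly the opposite of what you want. Touched bundles are in general matched to agents other than their owners in the final matching, and the paper never needs (nor has) the owner edge for touched bundles.

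\textbf{Condition (ii) for the new edge needs an extra claim you do not have.} You assert that the robust-demand definition plus removing $i$'s least-valued item ``guarantees $(i,\zalloci[j])\in E(G(t+1))$.'' The EFX-feasibility half is fine, but condition (ii) requires the \emph{strict} inequality $\vali(\zalloci[j](t+1))>\vali(\zalloci(t+1))$. To get strictness one needs the paper's Claim~\ref{claim-no-edge}: because $i^*$ was unmatched and $M(t)$ maximizes $|M\cap\{(i,\zalloci)\}|$ under (a), the bundle $\zalloci[i^*]$ cannot be EFX-feasible for $i^*$; hence some bundle strictly beats $\vali[i^*](\zalloci[i^*](t))$ after removing an item, and the robust demand beats that in turn. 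Without this step your edge claim is not established.

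\textbf{You overcomplicate the matching repair, and in doing so you bury the place where condition (c) is actually used.} After adding $(i^*,\zalloci[j^*])$ there is no need to ``re-match $k$'' with an alternating-path argument: the lemma only asks that all touched bundles be covered, not that all agents be matched, so simply taking $M'=M(t)\setminus\{(k,\zalloci[j^*])\}\cup\{(i^*,\zalloci[j^*])\}$ suffices once you know all other edges of $M(t)$ survive into $G(t+1)$ (they do: no other bundle shrinks, so both EFX-feasibility and condition (ii) are preserved for edges not incident to $\zalloci[j^*]$). The role of condition (c) is elsewhere: it is what forces $\zalloci[j^*]$ to already be matched in $M(t)$ to some $k$ (otherwise $M(t)\cup\{(i^*,\zalloci[j^*])\}$ would be a larger matching satisfying (a) and (b)). You treat ``$\zalloci[j]$ was unmatched'' as a live case, but the paper's Claim~\ref{cl:matched_bundle} rules it out, and your re-matching machinery is then unnecessary.

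So: same high-level plan, but the ``clean way through'' you chose is the wrong lemma, and the two places where conditions (b) and (c) actually do the work are missing from your write-up.
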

\begin{proof} The proof proceeds by induction on $t$. For $t=0$, the statement is true, as $T(0)=\emptyset$ and we can choose $M(0)=\emptyset$. We assume that at time $t<\bar{t}$ the matching $M(t)$ covers all bundles from $T(t)$. Let us assume that agent $i^*$ is chosen in the execution of the if statement (Line $9$ of Algorithm~\ref{alg:1}) at time $t$ and the item $c^*\in\{\argmin_{g\in\zalloci[j^*](t)}\vali[i^*](g)\}$ is removed from the bundle $\zalloci[j^*](t)=\robustd{i^*,\zallocs(t)}$ in Lines 10-11. Now, in order to show that all bundles in $T(t+1)$ can be  matched to agents in $G(t+1)$, we will need a few useful observations about the edges of $G(t+1)$.

\begin{claim}
\label{cl:old_edges}
Graph $G(t+1)$ contains every edge $e=(i,\zalloci[j])$ with $\zalloci[j]\neq\zalloci[j^*]$ of the graph $G(t)$.
\end{claim}
\begin{proof} Indeed, since no item is removed from bundle $\zalloci[j]$, we have  $\vali(\zalloci[j](t+1))=\vali(\zalloci[j](t))\ge\vali(\zalloci(t))\ge\vali(\zalloci(t+1))$. Similarly, $\zalloci[j](t+1)$ remains EFX feasible for agent $i$. Thus $(i,\zalloci[j])\in E(G(t+1))$.
\end{proof}

\begin{claim}\label{claim-no-edge}
There is no edge between $i^*$ and $\zalloci[i^*]$ in $G(t)$.
\end{claim} 
\begin{proof} We prove this claim by contradiction. Suppose the edge $(i^*,\zalloci[i^*])$ belongs to $E(G(t))$, but $i^*$ was not matched in $M(t)$. Then we may increase the number of matched pairs $(i,\zalloci)$ in $M(t)$ by adding or possibly substituting another edge to $\zalloci[i^*]$ in $M(t)$. As the set of matched bundles in $\zallocs(t)$ can only increase after such an operation, we get a contradiction to condition (b) (Line 7 of Algorithm~\ref{alg:1}).
\end{proof} 

\begin{claim}\label{claim-edge}
There is an edge between $i^*$ and $\zalloci[j^*]$ in both graphs $G(t)$ and $G(t+1)$.
\end{claim}
\begin{proof}
Claim~\ref{claim-no-edge} says that $\zalloci[i^*](t)$ is not EFX feasible for agent $i^*$, i.e., there exists agent $i\in [n]$ such that $\vali[i^*](\zalloci[i^*](t))< \vali[i^*](\zalloci(t)\setminus \{g\})$ for some item $g\in\zalloci(t)$. As bundle $\zalloci[j^*](t)$ is the robust demand of $i^*$ at time $t$, we have 
\[
\vali[i^*](\zalloci[j^*](t+1))=\vali[i^*](\zalloci[j^*](t)\setminus\{c^*\})\ge\max_{c\in\zalloci(t)}\vali[i^*](\zalloci(t)\setminus\{c\})>\vali[i^*](\zalloci[i^*](t))\ge
\vali[i^*](\zalloci[i^*](t+1)).
\]
Similarly, $\zalloci[j^*](t)$ and $\zalloci[j^*](t+1)$ are EFX feasible for agent $i^*$, as $\vali[i^*](\zalloci[j^*](t)\setminus\{c^*\})$ is at least as high as $\vali[i^*](\zalloci[j](t)\setminus\{g\})$ for all $j\in[n]$ and any item $g\in\zalloci[j](t)$. Therefore, $(i^*,\zalloci[i^*])\in E(G(t+1))$ and $(i^*,\zalloci[i^*])\in E(G(t))$.
\end{proof}
We can now show that the algorithm will match $\zalloci[j^*]$ to some agent. 
\begin{claim}
\label{cl:matched_bundle}
Bundle $\zalloci[j^*]$ is matched in $M(t)$ to an agent $k$.
\end{claim}
\begin{proof}
If the bundle $\zalloci[j^*]$ is unmatched in $M(t)$, then there is a bigger matching $M(t)\cup (i^*,\zalloci[j^*])$ in $G(t)$: by Claim~\ref{claim-edge}, the edge $(i^*,\zalloci[j^*])$ belongs to $E(G(t))$ and, as the agent chosen when Line 9 is executed during iteration $t$, $i^*$ is unmatched in $M(t)$. This contradicts property (c) in Line 8 of Algorithm~\ref{alg:1}. 
\end{proof}

Finally, we note that $T(t+1)=T(t)\cup\zalloci[j^*]$, as bundle $\zalloci[j]$ is the only new bundle that can become touched in iteration $t$. Now consider the matching 
\be
\label{eq:useful_matching}
M'\eqdef M(t)\setminus\{(k,\zalloci[j^*])\}\cup\{(i^*,\zalloci[j^*])\}
\ee 
in $G(t+1)$ that covers all bundles in $T(t+1)$. Indeed, according to Claim~\ref{cl:old_edges}, $G(t+1)$ contains all edges of $M(t)\setminus \{(k,\zalloci[j^*])\}$ and, according to Claim~\ref{claim-edge}, the edge $(i^*,\zalloci[j^*])$ belongs to $G(t+1)$ as well.
Hence, $M'$ is a matching in $G(t+1)$. Since (by our induction hypothesis) all bundles of $T(t)\setminus \{\zalloci[j^*]\}$ are matched in $M(t)\setminus \{(k,\zalloci[j^*])\}$, $M'$ covers all bundles in $T(t+1)=T(t)\cup\zalloci[j^*]$. The proof of the induction step is complete.
\end{proof}

As the algorithm terminates after finding a complete matching in the EFX feasibility graph, the returned solution $\yallocs$ must be an EFX allocation. We note that Algorithm~\ref{alg:1} runs in polynomial time: it executes at most $m$ iterations and all steps in each iteration can be completed in polynomial time. Indeed, the only non trivial part is the computation of matching $M$ (Line 5) under the conditions (a)-(c) (Lines 6-8). To this end, we can assign weights to the edges of the graph $G$ that express our preferences (a)-(c) and compute a maximum weighted matching in the weighed version of $G$.\footnote{We remark that this is apparently not the fastest way to compute the desired matching.} For example, we can give a weight of $n^4$ to the edges of $G$ incident to bundle $\zalloci\in T$, a weight of $n^2$ to edges $(i,\zalloci)\in E(G)$ for $i\in [n]$, and weight of $1$ to the remaining edges of $G$.

In the following we give efficiency guarantees for the returned EFX allocation $\yallocs$. Very informally, Algorithm~\ref{alg:1} does not remove too many items. First, we observe that at the end of the algorithm there must be at least one untouched bundle, i.e., a bundle from which no items have been removed.   

\begin{claim}
\label{cl:untouched_bundle}
There is an untouched bundle $\zalloci=\alloci$ upon termination of Algorithm~\ref{alg:1}. 
\end{claim}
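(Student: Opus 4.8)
The plan is to show that the matching $M$ produced in the final round is of size $n$ (it is a perfect matching, since the algorithm only terminates when $|M| = n$), and that it matches each agent $i$ to their own initial-type bundle $\zalloci$, and then observe that a counting argument forces at least one of these bundles to be untouched. First I would recall from the termination condition of the repeat loop that $|M| = n$, so every agent is matched and every bundle in $\zallocs$ is matched. Moreover, by the comment on the \texttt{return} line and condition (b) in Line~7, the final matching must satisfy $M(i) = \zalloci$ for all $i \in [n]$: if some agent $i$ were matched to $\zalloci[j]$ with $j \neq i$, then by the structure of a perfect matching there is a ``cycle'' of misaligned pairs, and one can show (as in the proof of Claim~\ref{claim-no-edge}) that rotating along this cycle to align one more agent with their own bundle would be a matching of the same size with strictly more edges of the form $(i, \zalloci)$, contradicting the maximality in condition (b). So in the final round all $n$ agents are matched to their own bundles.

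Next I would bound the number of \emph{touched} bundles at termination. Every time an item is removed (Lines 10--11), it is removed from the \emph{robust demand} bundle $\zalloci[j^*]$ of the unmatched agent $i^*$; by Claim~\ref{claim-no-edge} the bundle $\zalloci[i^*]$ carries no edge to $i^*$ in $G(t)$, so in particular $i^*$ is not matched to $\zalloci[i^*]$, and the newly touched bundle $\zalloci[j^*]$ satisfies $j^* \neq i^*$ — wait, this alone does not immediately cap the count. The cleaner route: in \emph{every} round except the last, some agent $i^*$ is unmatched, and the robust-demand bundle that gets an item removed is, by Claim~\ref{claim-edge}, thereafter EFX-feasible and linked by an edge to $i^*$; combined with Claim~\ref{cl:matched_bundle} and the rotation argument of Lemma~\ref{lem-exist-matching}, each such round strictly increases the size of the maximum matching achievable (or keeps all touched bundles matched while a new bundle becomes touched). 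The key invariant to extract is: \emph{at every iteration $t$, the number of touched bundles $|T(t)|$ is strictly less than $n$.} Indeed, if all $n$ bundles were touched, then by Lemma~\ref{lem-exist-matching} all $n$ of them would be matched, i.e. $|M| = n$, and the algorithm would have already terminated in that round before touching an $n$-th bundle. Hence at termination at most $n-1$ bundles are touched, so at least one bundle $\zalloci$ equals its initial $\alloci$ (is untouched); and since in the final round $M(i) = \zalloci = \alloci$, this is the desired untouched bundle.

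The main obstacle I anticipate is making the invariant ``$|T(t)| < n$ for all $t$'' fully rigorous: one must argue that a bundle becomes touched only in a round where the \texttt{if} branch executes (an agent is unmatched), and that such a round cannot be the round in which the $n$-th bundle is first touched, because a round with $|T| = n$ would force $|M| = n$ by Lemma~\ref{lem-exist-matching} and hence exit the loop. Concretely, suppose for contradiction that in some round $t$ we have $|T(t)| = n$. By Lemma~\ref{lem-exist-matching} the matching $M(t)$ computed under conditions (a)--(c) covers all bundles in $T(t)$, which is all $n$ bundles; a matching covering $n$ bundles in a bipartite graph with $n$ agents has size $n$, so $|M(t)| = n$ and the loop terminates at the \texttt{until} check without ever entering the \texttt{if} in Line~9 to touch a further bundle. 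Therefore $|T(t)| \le n-1$ throughout, and in particular at termination there remains an untouched bundle $\zalloci = \alloci$. The one subtlety to handle carefully is the base/boundary: $T(0) = \emptyset$, and each round adds at most one new touched bundle (as noted just before \eqref{eq:useful_matching}, the only candidate is the robust-demand bundle $\zalloci[j^*]$), so the count increases by at most one per round and the ``first time it would reach $n$'' argument applies cleanly.
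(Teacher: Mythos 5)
Your middle ``Concretely'' paragraph, which is supposed to make the argument rigorous, does not actually reach a contradiction. You suppose $|T(t)| = n$, invoke Lemma~\ref{lem-exist-matching} to get $|M(t)| = n$, and conclude that the loop exits at round $t$ without entering the \texttt{if} branch. But that is \emph{not} a contradiction: it is perfectly consistent with the algorithm terminating at round $t$ with all $n$ bundles already touched, which is exactly the scenario the claim rules out. The sentence ``Therefore $|T(t)| \le n-1$ throughout'' is a non-sequitur; you have only shown what happens \emph{once} $T$ reaches size $n$, not that $T$ never gets there.

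The missing step is to back up to round $t-1$, the round in which the $n$-th bundle $\zalloci[j^*]$ was touched, and use Claim~\ref{cl:matched_bundle} \emph{there}. At round $t-1$ we have $|T(t-1)| = n-1$ (everything except $\zalloci[j^*]$), and condition~(a) forces all $n-1$ of these to be matched in $M(t-1)$. Now Claim~\ref{cl:matched_bundle} says the robust-demand bundle $\zalloci[j^*]$ chosen in Line~10 is \emph{also} matched in $M(t-1)$ (else one could enlarge $M(t-1)$ with the edge $(i^*,\zalloci[j^*])$ from Claim~\ref{claim-edge}, violating (c)). Hence $|M(t-1)| = n$, so the \texttt{if} branch in Line~9 should never have been entered at round $t-1$, and the algorithm would have terminated then with $\zalloci[j^*]$ untouched -- that is the contradiction. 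You do gesture at Claim~\ref{cl:matched_bundle} in your first and third paragraphs, but it never enters the formal argument, which is where it is indispensable; Lemma~\ref{lem-exist-matching} alone only gets you $|M(t-1)| \ge n-1$, not $n$. As a minor point, your opening paragraph about the final matching satisfying $M(i)=\zalloci$ for all $i$ is not needed for this claim (that fact is established separately in the proof of Theorem~\ref{thm:main}); the claim only asserts the existence of an untouched bundle, which is immediate once one knows $|T| \le n-1$ at termination.
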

\begin{proof}
Assume to the contrary that the algorithm has removed items from every bundle in $\zallocs$ after iteration $t$, i.e., the set of touched bundles becomes the set $T(t+1)=\{\zalloci\}_{i\in[n]}$ of size $n$ after iteration $t$. Then, by Lemma~\ref{lem-exist-matching}, the algorithm will match all bundles in $M(t+1)$ and terminate at round $t+1$. Let agent $i^*$ and bundle $\zalloci[j^*]$ be the ones chosen, respectively, when Algorithm~\ref{alg:1} executes Lines 9 and 10 at iteration $t$. At this time, we have $\zalloci[j]\in T(t)$ for all $j\neq j^*$. Thus, all touched bundles $\zalloci[j]$ with $j\neq j^*$ must be matched in $M(t)$ (due to the condition (a) in Line 6). According to Claim~\ref{cl:matched_bundle}, bundle $\zalloci[j^*]$ must also be matched in $M(t)$. Therefore, $|M(t)|=n$ and the algorithm should have terminated after iteration $t$ leaving bundle $\zalloci[j^*]$ untouched.
%
\end{proof}

We are now ready to present guarantees for the Nash welfare of allocation $\yallocs$. 

\begin{lemma}
\label{lem:half_nsw_max}
$\vali(\alloci)\le 2\cdot\vali(\yalloci)$ for any agent $i\in[n]$ and there exists an agent $i^o\in [n]$ such that $\vali[i^o](\alloci[i^o])\le\vali[i^o](\yalloci[i^o])$.
\end{lemma}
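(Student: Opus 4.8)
The plan is to track, for each agent $i$, how much value their \emph{own} initial bundle $\alloci$ can lose over the course of the algorithm, and then use the matching structure together with Nash-welfare maximality to convert this into the two claimed statements. First I would establish the easy direction: the second assertion. By Claim~\ref{cl:untouched_bundle} there is an untouched bundle $\zalloci[i^o] = \alloci[i^o]$ upon termination. Since the algorithm returns $\yallocs = (M(1),\ldots,M(n))$ with $M(k)=\zalloci[k]$ for all $k$ (as noted in the pseudocode), agent $i^o$ receives exactly $\yalloci[i^o] = \zalloci[i^o] = \alloci[i^o]$, so $\vali[i^o](\yalloci[i^o]) = \vali[i^o](\alloci[i^o])$, which certainly gives the required inequality $\vali[i^o](\alloci[i^o]) \le \vali[i^o](\yalloci[i^o])$.

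For the first assertion, fix an agent $i$. The key observation is that when the algorithm finishes, agent $i$ is matched to the bundle $\yalloci[i] = \zalloci[i]$ (its \emph{own} initial bundle, possibly with items removed), because the final matching $M$ satisfies $M(i)=\zalloci[i]$. So I need to bound how much value the bundle indexed $i$ can lose. The crucial structural fact is: whenever an item is removed from a touched bundle $\zalloci[j]$ in Line~11, it is removed because that bundle was the \emph{robust demand} of some unmatched agent $i'$, and the removed item $c$ is a \emph{minimum}-value item of $\zalloci[j]$ for $i'$. Before this removal, $\zalloci[j](t)$ must have had value $\vali[i'](\zalloci[j](t)) > \vali[i'](\zalloci[i'](t)) \ge 0$ for agent $i'$ (this is exactly the chain of inequalities in Claim~\ref{claim-edge}, using that $\zalloci[i'](t)$ is not EFX-feasible for $i'$ so $i'$ has strictly positive demand elsewhere). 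Since $c$ is a minimum-value item of a bundle for $i'$, removing $c$ costs $i'$ at most a $1/|\zalloci[j](t)|$ fraction of $\vali[i'](\zalloci[j](t))$; in particular if $\zalloci[j](t)$ had at least two items, $\vali[i'](\zalloci[j](t)\setminus\{c\}) \ge \tfrac12 \vali[i'](\zalloci[j](t))$, and if it had exactly one item then that item is removed and the bundle becomes empty. Now here is where Nash-welfare maximality of $\allocs$ enters: I would argue that the bundle indexed $i$ can be touched (have an item removed) \textbf{at most once}. Indeed, once $\zalloci[i]$ is touched it enters $T$, and by condition (a) every touched bundle is matched in every subsequent round, so $\zalloci[i]$ is matched from then on; moreover the item-removal in Line~11 only ever acts on the robust demand bundle of an \emph{unmatched} agent, and — this is the point to nail down — a bundle that is already matched and stays matched is never selected as a robust demand that gets trimmed, OR if it is selected the trimming still preserves the half-bound. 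Let me instead take the cleaner route below.

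Cleaner route for the first assertion: I claim that for each index $i$, either $\yalloci[i] = \alloci[i]$ (no loss, done, factor $1$), or $\zalloci[i]$ was touched and I bound its \emph{total} loss. The bundle $\zalloci[i]$ is removed-from only when it is the robust demand of the currently-chosen unmatched agent $i'$; at the moment just before the first such removal, write $v^* = \vali[i'](\zalloci[i])$ for that triggering agent $i'$, and note $v^* > \vali[i'](\zalloci[i'](t)) \ge 0$. But I really want the bound in terms of agent $i$'s own valuation $\vali$, and here is where I use that $\allocs$ maximizes Nash welfare and is therefore EF1 (Caragiannis et al.~\cite{caragiannis2016unreasonable}): actually the self-contained argument is that after any single removal of a min-item $c$ from a bundle $B$ (w.r.t.\ \emph{any} agent), we have $\vali(B \setminus \{c\}) \ge \vali(B)/2$ for \emph{every} agent — no, that is false in general since $c$ need not be min for $\vali$. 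The correct statement I should prove and use is: the bundle $\zalloci[i]$, measured by $\vali$, never drops below $\vali(\alloci)/2$, because each time an item is removed from it, that item was a min-value item for the \emph{triggering agent} and the triggering agent's residual demand for $\zalloci[i]$ must still exceed their own current bundle value, which by a telescoping/EFX-chain argument forces the removed item's $\vali$-value to be small relative to $\vali(\alloci)$; combining this with the fact (to be verified from the matching invariants, e.g.\ in the style of Claim~\ref{cl:matched_bundle} and condition (b)) that $\zalloci[i]$ is trimmed at most once — since after its first trimming it is matched and condition (b) prevents a matched "own-bundle" edge from being broken to trim it again — yields $\vali(\yalloci[i]) = \vali(\zalloci[i]) \ge \vali(\alloci)/2$, i.e.\ $\vali(\alloci) \le 2\vali(\yalloci[i])$.

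\textbf{Main obstacle.} The delicate point — and the one I would spend the most care on — is proving that the bundle indexed $i$ loses at most a factor of $2$ \emph{in agent $i$'s own valuation} $\vali$, rather than in the valuation of whichever agent triggered the removal. The removal rule in Line~11 uses $\argmin_{g}\vali[i^*](g)$ for the \emph{triggering} agent $i^*$, not for agent $i$, so a single removed item could a priori be very valuable to $i$. I expect the resolution to be a combination of: (i) showing $\zalloci[i]$ is trimmed at most once (so "what if it's trimmed down to one item over many rounds" doesn't arise for the own-bundle), using the matching-maximality conditions (a)–(c) exactly as in Claims~\ref{claim-no-edge}–\ref{cl:matched_bundle}; and (ii) for that single trimming, bounding the removed item's $\vali$-value via the EF1 property of the maximum-Nash-welfare input allocation — this is precisely the place the hypothesis $\nashw(\allocs)=\optnw(\items)$ is used, and I anticipate it enters through the standard fact that in a Nash-optimal (hence EF1) allocation, for the triggering agent $i^*$ and the bundle $\zalloci[i]=\alloci[i]$, $i^*$'s min-item in $\alloci[i]$ has value at most $\vali[i^*](\alloci[i])/|\alloci[i]|$, and cross-referencing $i^*$'s and $i$'s valuations through the Nash/EF1 inequalities $\vali[i^*](\alloci[i^*])\ge \vali[i^*](\alloci[i]\setminus\{c\})$ and $\vali(\alloci)\ge \vali(\alloci[i^*]\setminus\{c'\})$. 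Making this cross-agent comparison airtight is the crux; everything else is bookkeeping over the matching invariants already developed in Lemmas~\ref{lem-exist-matching} and the claims preceding this lemma.
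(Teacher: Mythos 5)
Your handling of the second assertion is correct and matches the paper: Claim~\ref{cl:untouched_bundle} plus the fact that $M(i)=\zalloci$ on termination immediately yields $\vali[i^o](\yalloci[i^o])\ge\vali[i^o](\alloci[i^o])$.

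For the first assertion, however, there is a genuine gap, and your own caveats correctly locate it. Your plan rests on showing that the bundle indexed $i$ is trimmed at most once, using the matching conditions to argue that once $\zalloci$ is touched and therefore matched, it is never again selected for trimming. This is false, and in fact it is \emph{refuted} by Claim~\ref{cl:matched_bundle}: that claim establishes precisely that the robust-demand bundle $\zalloci[j^*]$ \emph{is} matched in $M(t)$ at the very moment the algorithm removes an item from it. Being matched (even being in $T$ and matched to its own agent) offers no protection against becoming the robust demand of a different, unmatched agent in a later round and losing another item. So a bundle can be trimmed many times, and your bound on the number of trimmings does not hold. The second half of your plan is also not closed: the removed item is the $\vali[i^*]$-minimum for the \emph{triggering} agent, and nothing in the Nash-optimality or EF1 structure bounds its $\vali$-value relative to $\vali(\alloci)$ on a per-removal basis; a single removal can in principle take far more than half of agent $i$'s value for her bundle, so no local, removal-by-removal accounting can give the factor $2$.

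The paper's argument is global rather than local. It does not track how much value each removal destroys. Instead, it waits until the first iteration $t$ at which some bundle $\zalloci[j^*]$ drops below half of $\vali[j^*](\alloci[j^*])$, forms the union of the matching $M'=M(t)\setminus\{(k,\zalloci[j^*])\}\cup\{(i^*,\zalloci[j^*])\}$ with the identity matching $M^o=\{(i,\zalloci)\}_i$, rules out non-degenerate cycles (they would give a Pareto improvement of $\allocs$), and then extracts the augmenting path from an \emph{unmatched, hence untouched} bundle $\zalloci[j_1]=\alloci[j_1]$ to $\zalloci[j^*]$. Shifting the residual bundles $\zalloci[j_{i+1}]$ one step back along this path produces an allocation $\hallocs$ in which agent $j_1$ at least doubles her value, the intermediate agents do not lose, and agent $j^*$ retains more than half of $\vali[j^*](\alloci[j^*])$; multiplying these gives $\nashw(\hallocs)>\nashw(\allocs)$, contradicting optimality. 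The factor of two is supplied jointly by the untouched endpoint (one factor of $2$ gained) and the hypothesized violation at the other endpoint (one factor of $2$ lost), not by any bound on individual removals. This is the idea you would need to discover; the local accounting you sketch cannot be made to work.
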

\begin{proof} We note first that when the algorithm terminates at time $\bar{t}$, then $\vali(\zalloci)\le\vali(\yalloci)$ for all $i\in[n]$. Indeed, this is the case if agent $i$ is matched to bundle $\zalloci$. If $i$ is matched to another bundle $\yalloci=\zalloci[j]$ in $G(\bar{t})$, then $\vali(\zalloci[j])>\vali(\zalloci)$ by the definition of the EFX feasibility graph $G(\bar{t})$. Thus,  Claim~\ref{cl:untouched_bundle} proves the second part of the lemma since its states that there exists some agent $i^o$ such that $\vali[i^o](\alloci[i^o])=\vali[i^o](\zalloci[i^o])\le\vali[i^o](\yalloci[i^o])$. 
	
To complete the proof of the lemma, it is sufficient to show that $2\cdot\vali(\zalloci)\ge\vali(\alloci)$ for any agent $i\in [n]$. We do so by contradiction. If this is not the case, let $t\in\{0,\ldots,\bar{t}\}$ be the first time when $2\cdot\vali(\zalloci(t))$ becomes strictly smaller than $\vali(\alloci)$ for an agent $i$ after an item was removed. Let $i^*$ be the agent who was chosen in Line $9$ and $\zalloci[j^*]$ be the bundle chosen in Line $10$ of the algorithm at iteration $t$. Then $j^*$ is the agent for whom $2\cdot\vali[j^*](\zalloci[j^*](t))$ became smaller than $\vali[j^*](\alloci[j^*])$. For convenience of notation, we denote the matching $\{(i,\zalloci)\}_{i\in[n]}$ by $M^o$. We consider the matching $M'\bydef M(t)\setminus\{(k,\zalloci[j^*])\}\cup\{(i^*,\zalloci[j^*])\}$, that was also used in the proof of Lemma~\ref{lem-exist-matching}. Recall that, by Claim~\ref{cl:matched_bundle}, bundle $\zalloci[j]$ is matched to agent $k$ in $M(t)$. As agent $k$ is not matched in $M'$, we have $|M'|<n=|M^o|$. Therefore, one can represent the union of the matchings $M'$ and $M^o$ as a collection of augmenting paths and cycles, including degenerate cycles that consist of a single edge that belongs to both $M'$ and $M^o$. 

First, assume that $M'\cup M^o$ has a non-degenerate cycle $C$ with $2k$ vertices $\{j_1,\ldots,j_k\}$ and $\{\zalloci[j_1],\ldots,\zalloci[j_k]\}$.
\[
C\eqdef\quad\{(j_1,\zalloci[j_2]),\ldots,(j_{k-1},\zalloci[j_k]),(j_k,\zalloci[j_1])\}\in M',\quad\quad\{(j_1,\zalloci[j_1]),\ldots,(j_k,\zalloci[j_k])\}\in M^o.
\]
For convenience of notation, let us denote $j_{k+1}=j_1$. Since each $(j_i,\zalloci[j_{i+1}])\in M'$ is an edge in the EFX feasibility graph $G(t)$, we have $\vali[j_i](\zalloci[j_{i+1}])>\vali[j_i](\zalloci[j_i])$ for every $i\in[k]$. This implies that $\vali[j_i](\alloci[j_i])+\vali[j_i](\zalloci[j_{i+1}])-\vali[j_i](\zalloci[j_i])>\vali[j_i](\alloci[j_i])$ for every $i\in[k]$. Thus, we get a Pareto improvement in the allocation $\allocs$ for each agent $j_i\in C$ by replacing bundle $\alloci[j_i]$ with bundle $\alloci[j_i]\setminus\zalloci[j_i]\cup\zalloci[j_{i+1}]$ for every $i\in[k]$. Hence, we can improve the Nash welfare of the allocation $\allocs$, a contradiction to its optimality.

Second, we observe that bundle $\zalloci[j^*]$ is matched in $M'$ to $i^*$ which, by Claim~\ref{claim-no-edge}, must be different from $j^*$. Since $M'\cup M^o$ does not have non-degenerate cycles and bundle $\zalloci[j^*]$ is not matched to $j^*$ in $M'$, bundle $\zalloci[j^*]$ must belong to an augmenting path $P$ that originates from an unmatched bundle $\zalloci[j_1]$ in $M'$; notice that $\zalloci[j_1]$ is unmatched in $M(t)$ as well, since matching $M(t)$ has the same set of bundles as $M'$. The augmenting path $P$ consists of $k$ agent vertices $j_1,\ldots,j_k$ and $k+1$ bundle vertices $\zalloci[j_1],\ldots,\zalloci[j_{k+1}]$, where $j_{k+1}=j^*$, and $2k$ edges:
\be
\label{eq:augmenting_path}
P\eqdef\quad\{(j_1,\zalloci[j_2]),\ldots,(j_k,\zalloci[j_{k+1}])\}\in M',
\quad\quad\{(j_1,\zalloci[j_1]),\ldots,(j_k,\zalloci[j_k])\}\in M^o.
\ee

We consider the following transformation $\hallocs$ of the initial allocation $\allocs$ (see Figure~\ref{fig:augmenting}):
\[
\hallocs\eqdef
\begin{cases}
\halloci[j_1]=\alloci[j_1]\cup\zalloci[j_{2}], & \\
\halloci[j_i]=\alloci[j_i]\setminus\zalloci[j_i]\cup\zalloci[j_{i+1}], &i\in\{2,...,k\}\quad\\
\halloci[j_{k+1}]=\alloci[j_{k+1}]\setminus\zalloci[j_{k+1}] &\\
\halloci[j]=\alloci[j], & j\notin\{j_1,...,j_{k+1}\}
\end{cases}
\]

\begin{figure}[h]
	\centering
	\includegraphics[width=0.9\textwidth]{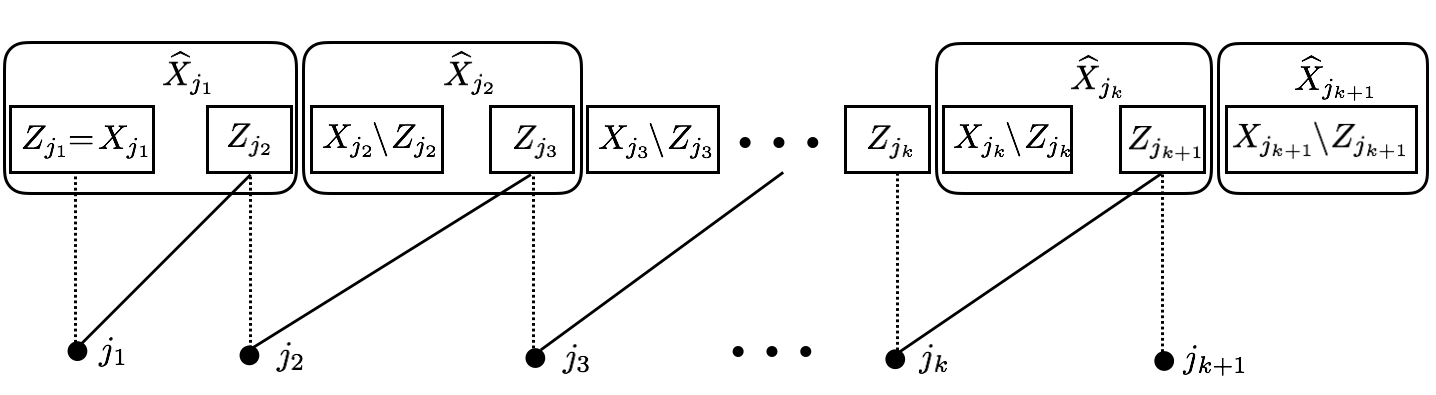}
	\caption{The main argument in the proof of Lemma~\ref{lem:half_nsw_max}: reallocating the items of allocation $\allocs$ to obtain another allocation $\hallocs$ with improved Nash welfare.}	
	\label{fig:augmenting}	
\end{figure}

Let us denote the set of agents $\{j_1,\ldots,j_{k+1}\}$ as $J$. Similar to the previous case where we consdiered a non-degenerate cycle, we have $\vali[j_i](\zalloci[j_{i+1}])>\vali[j_i](\zalloci[j_i])$ for every $i\in[k]$. By our assumption for agent $j^*$ (with $j^*=j_{k+1}$), it holds that  $\vali[j_{k+1}](\zalloci[j_{k+1}])<\frac{1}{2}\cdot \vali[j_{k+1}](\alloci[j_{k+1}])$. Finally, since $\zalloci[j_1]$ is unmatched in $M(t)$ it must also be an untouched bundle, i.e., $\zalloci[j_1]=\alloci[j_1]$. Combining these properties, we get:
\begin{align*}
\nashw(\hallocs)^n &=
\vali[j_1](\alloci[j_1]\cup\zalloci[j_{2}])\cdot
\vali[j_{k+1}](\alloci[j_{k+1}]\setminus\zalloci[j_{k+1}])\cdot
\prod_{i=2}^{k}\vali[j_i](\alloci[j_i]\setminus\zalloci[j_i]\cup\zalloci[j_{i+1}])\cdot
\prod_{j\notin J}\vali[j](\alloci[j])\\
&= \left[\vali[j_1](\alloci[j_1])+\vali[j_1](\zalloci[j_{2}])\right] \cdot
\left[\vali[j_{k+1}](\alloci[j_{k+1}])-\vali[j_{k+1}](\zalloci[j_{k+1}])\right]\\
&\quad\quad \cdot
\prod_{i=2}^{k}{\left[\vali[j_i](\alloci[j_i])-\vali[j_i](\zalloci[j_i])+\vali[j_i](\zalloci[j_{i+1}])\right]} \cdot
\prod_{j\notin J}\vali[j](\alloci[j])\\
&> \left[\vali[j_1](\alloci[j_1])+\vali[j_1](\zalloci[j_1])\right]\cdot
\frac{1}{2}\vali[j_{k+1}](\alloci[j_{k+1}])\cdot
\prod_{i=2}^{k}\vali[j_i](\alloci[j_i])\cdot
\prod_{j\notin J}\vali[j](\alloci[j])=
\nashw(\allocs)^n
\end{align*}
We have reached a contradiction, since $\nashw(\allocs)=\optnw(\items)$.
\end{proof}

Using Lemma~\ref{lem-exist-matching} and~\ref{lem:half_nsw_max}, we obtain the following statement.

\begin{theorem}\label{thm:main}
Given a Nash welfare maximizing allocation $\allocs$, Algorithm~\ref{alg:1} computes in polynomial time a $2^{1-\frac{1}{n}}$-efficient EFX and Pareto-optimal allocation $\yallocs$ of $\cup_{i=1}^{n}\yalloci,$ such that $\yalloci\subseteq\alloci$ for all $i\in\agents$.
\end{theorem}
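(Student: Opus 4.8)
The plan is to assemble Theorem~\ref{thm:main} by combining the structural facts already established for Algorithm~\ref{alg:1}. First I would argue termination and the EFX property: every iteration of the repeat loop except the last removes one item from $\items$, so the loop runs for at most $m$ iterations, and when it halts the matching $M$ is perfect in $G(\bar t)$; since every edge of the EFX feasibility graph is by construction EFX feasible for the matched agent, the returned $\yallocs=(M(1),\ldots,M(n))$ is an EFX allocation. Lemma~\ref{lem-exist-matching} guarantees that condition (a) in Line~6 is always satisfiable, so the algorithm never gets stuck and the matching in each round is well defined.

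Next I would handle the polynomial running time. The only nontrivial step per round is computing a matching satisfying (a)--(c); as noted after Claim~\ref{cl:untouched_bundle}, one assigns edge weights — say $n^4$ to edges incident to a touched bundle, $n^2$ to edges of the form $(i,\zalloci)$, and $1$ otherwise — and computes a maximum-weight matching in $G$, which is polynomial. Everything else (building $G$, finding the robust demand, removing the least-valued item) is trivially polynomial, and there are at most $m$ rounds, so the whole algorithm is polynomial.

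For the efficiency guarantee, I would invoke Lemma~\ref{lem:half_nsw_max}: for every agent $i$ we have $\vali(\alloci)\le 2\,\vali(\yalloci)$, and there is an agent $i^o$ with $\vali[i^o](\alloci[i^o])\le\vali[i^o](\yalloci[i^o])$. Taking the geometric mean of the $n$ inequalities, and using that one of them loses no factor of $2$, gives
\[
\nashw(\allocs)=\left(\prod_{i}\vali(\alloci)\right)^{1/n}\le\left(2^{n-1}\prod_{i}\vali(\yalloci)\right)^{1/n}=2^{\frac{n-1}{n}}\,\nashw(\yallocs),
\]
so $2^{1-1/n}\cdot\nashw(\yallocs)\ge\nashw(\allocs)=\optnw(\items)$, i.e.\ $\yallocs$ is $2^{1-1/n}$-efficient. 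The inclusion $\yalloci\subseteq\alloci$ holds because the algorithm only ever removes items from bundles and never moves an item between bundles, so each final bundle $\zalloci$ — hence $\yalloci=M(i)=\zalloci$ by the remark in the return line — is a subset of the original $\alloci$ (note $M(i)=\zalloci$, the $i$-th bundle, since each $\zalloci[j]$ with $j$ its original index is matched to agent $j$: this is exactly the parenthetical claim in Line~13, which follows because every untouched-or-matched invariant forces $M(i)=\zalloci$; in any case $\yalloci$ is one of the bundles $\zalloci[j]\subseteq\alloci[j]$, and Pareto-optimality relative to the final item set $S$ follows from EFX feasibility condition (ii) together with the fact that $\yallocs$ is derived from the Pareto-optimal $\allocs$ by only removing items).

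The only genuinely substantive content is in the two lemmas already proved; the theorem itself is a bookkeeping step. The one place that needs a little care is justifying the parenthetical $\{M(i)=\zalloci\}_{i\in[n]}$ in the return statement — that the final perfect matching pairs each agent $i$ with the bundle $\zalloci$ that descends from $\alloci$ — but this is not actually needed for any of the three claimed properties (EFX, efficiency, $\yalloci\subseteq\alloci$): EFX and $\yalloci\subseteq\alloci$ follow from $\yalloci$ being \emph{some} $\zalloci[j]\subseteq\alloci[j]$, and efficiency follows purely from Lemma~\ref{lem:half_nsw_max}. So I expect no real obstacle; the main thing to get right is cleanly citing Lemma~\ref{lem-exist-matching} for correctness and Lemma~\ref{lem:half_nsw_max} for the Nash welfare bound, and then the geometric-mean computation above.
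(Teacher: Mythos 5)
The bookkeeping for termination, the EFX property, polynomial running time, and the geometric-mean derivation of $2^{1-1/n}$-efficiency from Lemma~\ref{lem:half_nsw_max} are all correct and match the paper. The gap is in your treatment of the two remaining claims, $\yalloci\subseteq\alloci$ and Pareto-optimality, both of which you dismiss as routine when in fact they need a real argument.

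You assert that $\yalloci\subseteq\alloci$ ``follows from $\yalloci$ being \emph{some} $\zalloci[j]\subseteq\alloci[j]$,'' but that gives $\yalloci\subseteq\alloci[j]$, which is $\alloci$ only if $j=i$. So the claim that the final perfect matching $M$ equals $M^o=\{(i,\zalloci)\}_{i\in[n]}$ is not optional bookkeeping; it is exactly what $\yalloci\subseteq\alloci$ asserts. Your appeal to an ``untouched-or-matched invariant'' is not an argument — condition (b) maximizes $|M\cap M^o|$ but by itself does not force $M=M^o$, since the edges $(j,\zalloci[j])$ need not be present in the EFX feasibility graph at termination. The paper's proof fills this in with a cycle argument that uses the Nash-optimality of $\allocs$: if $M\neq M^o$, the symmetric difference contains a non-degenerate alternating cycle $\{(j_1,\zalloci[j_2]),\ldots,(j_k,\zalloci[j_1])\}\in M$ paired with $\{(j_i,\zalloci[j_i])\}\in M^o$; each $M$-edge in the EFX graph has $\vali[j_i](\zalloci[j_{i+1}])>\vali[j_i](\zalloci[j_i])$ by condition (ii), so replacing $\alloci[j_i]$ by $\alloci[j_i]\setminus\zalloci[j_i]\cup\zalloci[j_{i+1}]$ around the cycle strictly increases every affected agent's value, a Pareto-improvement of $\allocs$, contradiction. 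This step is genuinely substantive and cannot be omitted.

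Your Pareto-optimality sentence has the same problem: it leans on ``derived from the Pareto-optimal $\allocs$ by only removing items,'' which again presupposes $\yalloci\subseteq\alloci$. Once $\yalloci\subseteq\alloci$ is in hand, the paper's argument is that a Pareto-improvement $\yallocs'$ of $\yallocs$ over $S$ could be augmented by returning $\alloci\setminus\yalloci$ to each agent $i$, yielding a Pareto-improvement of $\allocs$ over $\items$, contradicting Nash-optimality of $\allocs$. You should spell out both this step and the cycle argument; neither is implied by what precedes them.
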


\begin{proof} 
The correctness of the algorithm was shown in Lemma~\ref{lem-exist-matching}. According to the definition of the EFX feasibility graph $G$, the final allocation, which is a complete matching in $G$, must be an EFX allocation. Finally, according to Lemma~\ref{lem:half_nsw_max}, the Nash welfare of the final allocation $\yallocs$ is at least a $2^{-(n-1)/n}$-fraction of $\optnw(\items)=\nashw(\allocs)$. 

Moreover, if the returned complete matching $M$ is different from the matching $M^o\bydef\{(i,\zalloci)\}_{i\in [n]}$, then there must be a cycle $C:\{(j_1,\zalloci[j_2]),...,(j_{k-1},\zalloci[j_k]),(j_k,\zalloci[j_1])\}\in M,\quad\{(j_1,\zalloci[j_1]),\ldots,(j_k,\zalloci[j_k])\}\in M^o$ in $M\cup M^o$. Then one can get a strict improvement to the initial Nash welfare maximizing allocation $\allocs$, by setting $\halloci[j_i]=\alloci[j_i]\setminus\zalloci[j_i]\cup\zalloci[j_{i+1}],$ where $j_{k+1}\eqdef j_1$. Indeed, $\vali[j_i](\zalloci[j_{i+1}])>\vali[j_i](\zalloci[j_i])$ and $\vali[j_i](\halloci[j_i])>\vali[j_i](\alloci[j_i])$ for all $i\in[k]$, as $(j_i,\zalloci[j_{i+1}])$ is an edge in the EFX feasibility graph $G$. Hence, $M(i)=\zalloci$ for all $i\in[n]$ and $\yalloci\subseteq\alloci$. 

Finally, allocation $\yallocs$ is Pareto optimal, as otherwise one could first Pareto improve the allocation $\yallocs$ and then give back to each agent $i$ their original removed items $\alloci\setminus\yalloci$; the resulting allocation would be a Pareto-improvement of the initial allocation $\allocs$, contradicting its optimality.
\end{proof}

\section{Suboptimal input allocation}
\label{sec:poly-time-algorithm}
One might hope that the algorithm we presented in Section~\ref{sec:algorithm} could also work with an input allocation of suboptimal Nash welfare. Indeed, the algorithm would result in an EFX allocation in this case too. Unfortunately, the proof of the efficiency guarantee in Lemma~\ref{lem:half_nsw_max}  crucially relies on the optimality of the input allocation. In this section we present a modified Algorithm~\ref{alg:2} with a good efficiency guarantee,
provided that the input allocation is efficient as well, albeit not necessarily optimal.
In particular, starting with an initial allocation $\allocs$, the modified algorithm computes either an EFX allocation $\yallocs$ of some of the items in $\items$ that recovers a large fraction of the Nash welfare of the initial allocation or indicates that the Nash welfare of the input allocation can be improved significantly and provides such an improved allocation $\hallocs$ as output. Repeating the algorithm with the allocation $\hallocs$ as input until it produces an EFX allocation will yield, in polynomial time, an EFX allocation with at least half the Nash welfare of the initial input allocation.

\IncMargin{0.6em}
\begin{algorithm}[h]
	\KwIn{allocation $\allocs=(\alloci[1],\ldots,\alloci[n])$ of $\items$ and $\delta\in\R_+$.}
	\KwOut{allocation $\yallocs=(\yalloci[1],\ldots,\yalloci[n])$ of $S\subseteq\items$ or allocation $\hallocs=(\halloci[1],\ldots,\halloci[n])$ of $\items$.}
	Let $\zallocs=(\zalloci[1],\ldots,\zalloci[n])\gets(\alloci[1],\ldots,\alloci[n])$ be an ordered partition of $\items$;\\
	Let $M_0=\{(i,\zalloci):i\in[n]\}$;\\
	Let $\delta_1 = \frac{2\delta}{1-\delta}$\Comment*{$\frac{2+2\delta_1}{2+\delta_1}=1+\delta$} 
	\Repeat{$|M|=n$}
	{
		$G\gets\EFXgraph{[n],\zallocs}$;\\
		Let $T$ be the set of touched bundles in $\zallocs$;\\
		Compute a matching $M$ in $G$ such that\\
		\quad\textbf{(a)}~~ All bundles in $T$ are matched in $M$,\\
		\quad\textbf{(b)}~~ Under (a), $|M\cap M_0|$ is maximized, and\\
		\quad\textbf{(c)}~~ Under (a) and (b), $|M|$ is maximized;\\
		Let $\zalloci[j_1]$ be an unmatched bundle in $M$;\\
		\Repeat{$\itrem$}
		{
			$\itrem \gets \false$;\\
			Let $P=\{(j_1,\zalloci[j_2]), \ldots, (j_{k-1},\zalloci[j_k])\}  \gets \AugmentingPath{\zalloci[j_1],M\cup M_0}$;\\
			$\zalloci[j^*]\gets\robustd{j_k,\zallocs}$;\\
			\eIf{$\exists (j^o,\zalloci[j^*])\in P$}
			{ 
				$M \gets M\setminus \{(j^o,\zalloci[j^*])\} \cup \{(j_k,\zalloci[j^*])\}$;\\
			}{
				$\zallocs \gets (\zalloci[1], \ldots, \zalloci[j^*-1],\zalloci[j^*]\setminus \{c\}, \zalloci[j^*+1], \ldots, \zalloci[n])$, where $c\in\{\argmin_{g\in\zalloci[j^*]}\vali[j_k](g)\}$;\\ 
				\If{$(2+\delta_1)\vali[j^*](\zalloci[j^*])<\vali[j^*](\alloci[j^*])$}
				{
					$\halloci[j_1]\gets \alloci[j_1]\cup\zalloci[j_2]$;\\
					$\halloci[j_i]\gets \alloci[j_i]\setminus\zalloci[j_i]\cup\zalloci[j_{i+1}]$ for $i=2, \ldots k-1$;\\
					$\halloci[j_k]\gets \alloci[j_k]\setminus\zalloci[j_k]\cup\zalloci[j^*]$;\\
					$\halloci[j^*]\gets \alloci[j^*]\setminus\zalloci[j^*]$;\\		
					$\halloci[i]\gets \alloci[i]$ for $i\in [n]\setminus \{j_1, j_2, ..., j_k, j^*\}$.\\
					\KwRet{$\hallocs$};
				}
				$\itrem \gets \true$;
			}
		}	
	}
	\KwRet{$\yallocs=\left(M(1),\ldots,M(n)\right)$};
	\caption{Computes an EFX allocation of high Nash welfare or indicates that a significant improvement to the Nash welfare of the input allocation is possible.}
	\label{alg:2}
\end{algorithm}
\DecMargin{0.6em}


The general structure of the algorithm is the same as before. It proceeds in rounds (defined by the outer repeat-until loop; Lines 5-30). In each round, it tries to match as many bundles from the initial allocation to agents as possible (by essentially repeating the matching computation on the EFX feasibility graph; Lines 5-10) and repeatedly removes items from the bundles as long as this matching does not correspond to an EFX allocation. The main difference of the modified algorithm is in the selection of the agent and an item in their robust demand bundle to be removed at each round, which is implemented in lines 11-30 and includes the inner repeat-until loop. This step is more complicated now and may modify the computed matching as well. We exploit paths on the union of matching $M$ with the identity matching $M^o\bydef \{(i,\zalloci):i\in[n]\}$. In particular, given an unmatched bundle $\zalloci[j_1]$ (defined in line 11), the algorithm first computes the augmenting path of $M\cup M^o$ that originates from vertex $\zalloci[j_1]$ of the EFX feasibility graph. This is a path that alternates between edges of matching $M^o$ and $M$, and terminates with an edge of $M^o$ and the vertex corresponding to the unmatched agent $j_k$. The algorithm temporarily selects agent $j_k$ (Line 14, where the set of edges of $M$ that belong to the augmenting path is returned by the call to the subroutine $\augmentingpath$) and computes agent's $j_k$ robust demand (Line 15), say bundle $\zalloci[j^*]$, but it does not remove any item from it yet. 

If bundle $\zalloci[j^*]$ belongs to the augmenting path and appears in the edge $(j^o,\zalloci[j^*])$ of $M$, the algorithm modifies $M$ by removing edge $(j^o,\zalloci[j^*])$ and adding edge $(j_k,\zalloci[j^*])$ to it (Line 17). We will see shortly that this is a valid modification that does not violate the properties (a), (b), and (c) of matching $M$. The algorithm repeats the augmenting path process until the computed robust demand bundle $\zalloci[j^*]$ does not belong to the augmenting path. In this case, the execution flow enters the else statement and the least valued item for agent $j_k$ is removed from bundle $\zalloci[j^*]$ (Line 19). Before completing the current round, the algorithm checks whether the value of agent $j^*$ for bundle $\zalloci[j^*]$ has dropped significantly below half (more than by a factor of $2+\delta_1$) of her value for the initial bundle $\alloci[j^*]$ (Line 20). If this is the case, the algorithm computes a new allocation $\hallocs$ (Lines 21-25) which, as we will prove, has Nash welfare at least $(1+\delta)^{1/n}\nashw(\allocs)$. It then terminates, returning allocation $\hallocs$ as output. Otherwise, it indicates the end of the item removal process (Line 28), which will allow execution flow to leave the inner repeat-until loop.

For the analysis, we first claim that Lemma~\ref{lem-exist-matching} carries over to the modified algorithm. All we need to show is that matching $M$ keeps all its properties (a),(b), and (c) after every modification (in Line 17). As the modification affects neither the set of matched bundles, nor the size of $M\cap M^o$, it suffices to show that the new edge $(j_k,\zalloci[j^*])$ does exist in the EFX feasibility graph. The proof is similar to the proof of Claim~\ref{claim-edge}. Notice that the edge $(j_k,\zalloci[j_k])$ is not EFX feasible, since otherwise it could replace edge $(j_{k-1},\zalloci[j_k])$ in $M$ to increase $|M\cap M^o|$, still matching all touched bundles. Hence, there exists $j$ such that  
$$\vali[j_k](\zalloci[j_k])<\max_{g\in \zalloci[j]}{\vali[j_k](\zalloci[j]\setminus\{g\})} \leq \max_{g\in \zalloci[j^*]}{\vali[j_k](\zalloci[j^*]\setminus\{g\})}\leq \vali[j_k](\zalloci[j^*]),$$
where the second inequality follows since $\zalloci[j^*]$ is the robust demand of agent $j_k$. Hence, the edge $(j_k,\zalloci[j^*])$ indeed belongs to the EFX feasibility graph.
	
	
Now, if the algorithm terminates by returning allocation $\yallocs$ in Line 32, this will be an EFX allocation satisfying $(2+\delta_1)\cdot \vali(\yalloci)\geq \vali(\alloci)$ for every agent $i\in[n]$. Furthermore, the argument in the proof of Claim~\ref{cl:untouched_bundle} carries over, and one of the bundles, say $\alloci[i^o]$, will stay untouched until the end of the execution so that $\vali(\yalloci[i^o])\geq \vali[i^o](\alloci[i^o])$. Consequently, the Nash welfare of allocation $\yallocs$ is
$$\left(2+\delta_1\right)^{1-1/n}\cdot \nashw(\yallocs)=\left(\vali[i^o](\yalloci[i^o])\cdot \prod_{i\in[n]\setminus\{i^o\}}{(2+\delta_1)\vali(\yalloci)}\right)^{1/n} \geq \left(\prod_{i\in[n]}{\vali(\alloci)}\right)^{1/n}= \nashw(\allocs).$$
This is summarized in the following statement.
\begin{lemma}\label{lem:y}
	If the algorithm terminates and outputs the EFX allocation $\yallocs$, then $(2+\delta_1)^{1-1/n} \cdot \nashw(\yallocs)\geq \nashw(\allocs)$.
\end{lemma}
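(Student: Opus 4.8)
The statement is, in fact, essentially proved already in the paragraph that precedes it; the plan is just to assemble that argument cleanly. Following the proof of Lemma~\ref{lem:half_nsw_max} with the constant $2$ replaced by $2+\delta_1$, and with the termination condition of Algorithm~\ref{alg:2} playing the role that optimality of $\allocs$ played there, I would isolate three facts that hold whenever the algorithm halts by returning $\yallocs$: (i) $\vali(\yalloci)\ge\vali(\zalloci)$ for every agent $i$; (ii) $(2+\delta_1)\cdot\vali(\zalloci)\ge\vali(\alloci)$ for every agent $i$; and (iii) there is an agent $i^o$ whose bundle is never touched, so $\zalloci[i^o]=\alloci[i^o]$. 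Granting these, $(2+\delta_1)\cdot\vali(\yalloci)\ge\vali(\alloci)$ for all $i$, while $\vali[i^o](\yalloci[i^o])\ge\vali[i^o](\zalloci[i^o])=\vali[i^o](\alloci[i^o])$; multiplying, $\prod_i\vali(\yalloci)\ge(2+\delta_1)^{-(n-1)}\prod_i\vali(\alloci)$, and taking $n$-th roots yields the claimed bound $(2+\delta_1)^{1-1/n}\cdot\nashw(\yallocs)\ge\nashw(\allocs)$.

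Fact (i) is exactly the opening observation in the proof of Lemma~\ref{lem:half_nsw_max}: the output is a perfect matching $M$ in the EFX feasibility graph, and for each $i$ either $M(i)=\zalloci$ (equality) or $M(i)=\zalloci[j]$ with $j\ne i$, in which case condition (ii) in the definition of $E(G)$ gives $\vali(\zalloci[j])>\vali(\zalloci)$. Fact (ii) is enforced directly by the guard in Line 20 of Algorithm~\ref{alg:2}: bundle values change only through the removal in Line 19, every such removal from a bundle $\zalloci[j^*]$ is immediately followed by the test $(2+\delta_1)\vali[j^*](\zalloci[j^*])<\vali[j^*](\alloci[j^*])$ with the bundle-index agent $j^*$, and if this test ever succeeded the algorithm would output $\hallocs$ rather than $\yallocs$; so in the branch we are considering, $(2+\delta_1)\vali(\zalloci)\ge\vali(\alloci)$ holds right after every removal, in particular at termination (untouched bundles have $\zalloci=\alloci$ and satisfy it trivially). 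For Fact (iii), I would replay the counting argument of Claim~\ref{cl:untouched_bundle}: if every bundle eventually became touched, then at the iteration in which the last bundle became touched all $n$ bundles would already be matched, forcing $|M|=n$ and an earlier termination. The single ingredient that must be re-verified in the new setting is the analog of Claim~\ref{cl:matched_bundle}, namely that when an item is actually removed (the else branch, Line 19) the robust-demand bundle $\zalloci[j^*]$ is already matched in $M$: otherwise $\zalloci[j^*]$ does not lie on the augmenting path $P$, and appending to $P$ the edge $(j_k,\zalloci[j^*])$ --- which the discussion preceding the lemma shows belongs to the EFX feasibility graph --- produces a larger matching, contradicting property (c).

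The only real work, then, is bookkeeping: confirming that the structural machinery of Section~\ref{sec:algorithm} --- the matching-existence lemma, Claim~\ref{cl:matched_bundle}, and Claim~\ref{cl:untouched_bundle} --- survives the more elaborate inner loop of Algorithm~\ref{alg:2}. The main point to check is that the matching update in Line 17 removes and inserts edges that both lie outside the identity matching $M_0$ (hence the set of matched bundles and the quantity $|M\cap M_0|$ are unchanged, so properties (a)--(c) are preserved), and that the drop test in Line 20 is applied after \emph{every} item removal and with the correct agent $j^*$. Once these points are settled, the efficiency estimate is literally the geometric-mean computation from Lemma~\ref{lem:half_nsw_max}, and no property of the input allocation --- in particular, not its optimality --- is used beyond the mere fact that the run ends by returning $\yallocs$.
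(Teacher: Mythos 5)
Your proposal is correct and follows the paper's own argument, which appears as a prose paragraph immediately preceding the lemma statement: the paper asserts $(2+\delta_1)\cdot\vali(\yalloci)\ge\vali(\alloci)$ for all $i$, cites Claim~\ref{cl:untouched_bundle} for an untouched bundle giving $\vali[i^o](\yalloci[i^o])\ge\vali[i^o](\alloci[i^o])$, and then performs the same geometric-mean computation you do. You have merely made explicit the three ingredients (the matching inequality $\vali(\yalloci)\ge\vali(\zalloci)$, the Line-20 guard, and the untouched-bundle counting argument) that the paper says ``carry over.''
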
	

If, in contrast, the execution of the algorithm enters Lines 21-25 and the algorithm terminates returning the allocation $\hallocs$, we can show the following.

\begin{lemma}\label{lem:hat-x}
	If the algorithm terminates and outputs allocation $\hallocs$, then $\nashw(\hallocs)\geq (1+\delta)^{1/n}\nashw(\allocs)$.
\end{lemma}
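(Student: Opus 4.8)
The proof is the Algorithm~\ref{alg:2} analogue of the augmenting‑path argument in Lemma~\ref{lem:half_nsw_max}, with the sharp factor $2$ replaced by the $\delta$‑dependent factor $2+\delta_1$ coming from the guard in Line~20. Suppose the algorithm returns $\hallocs$, and fix the round and the (necessarily last, and hence else‑branch) iteration of the inner loop in which this happens. Let $P=\{(j_1,\zalloci[j_2]),\dots,(j_{k-1},\zalloci[j_k])\}$ be the augmenting path, $j_k$ the selected unmatched agent, $\zalloci[j^*]=\robustd{j_k,\zallocs}$ its robust demand, and $c\in\argmin_{g\in\zalloci[j^*]}\vali[j_k](g)$ the removed item; in what follows $\zalloci[j^*]$ denotes the post‑removal bundle. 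I would first record the structural facts feeding the computation: (i) $j_1,\dots,j_k,j^*$ are pairwise distinct, so the bundles manipulated by $\hallocs$ are subsets of the pairwise‑disjoint original bundles $\alloci[j_1],\dots,\alloci[j_k],\alloci[j^*]$ and additivity applies termwise ($j^*\notin\{j_2,\dots,j_k\}$ since the $\mathbf{if}$ of Line~16 failed; the borderline cases $j^*=j_1$ and $k=1$ are handled identically below after merging $j_1$'s two assignments into $\halloci[j_1]=\alloci[j_1]\cup\zalloci[j^*]$); (ii) $\zalloci[j_1]$ is unmatched in $M$, hence untouched (Lemma~\ref{lem-exist-matching}, property~(a)), so $\zalloci[j_1]=\alloci[j_1]$; (iii) every $(j_i,\zalloci[j_{i+1}])\in M\subseteq E(G)$ with $j_i\neq j_{i+1}$, hence $\vali[j_i](\zalloci[j_{i+1}])>\vali[j_i](\zalloci[j_i])$; (iv) the edge $(j_k,\zalloci[j_k])$ is not EFX feasible (the swap argument preceding Lemma~\ref{lem:y}: otherwise replacing $(j_{k-1},\zalloci[j_k])$ by $(j_k,\zalloci[j_k])$ in $M$ keeps all touched bundles matched while strictly increasing $|M\cap M_0|$, contradicting property~(b)), so $\vali[j_k](\zalloci[j_k])<\max_{g\in\zalloci[j]}\vali[j_k](\zalloci[j]\setminus\{g\})$ for some $j$; combining this with robust‑demand optimality and the fact that removing the $j_k$‑cheapest item of $\zalloci[j^*]\cup\{c\}$ maximizes the remaining $j_k$‑value, we get $\vali[j_k](\zalloci[j^*])=\max_g\vali[j_k]\big(\zalloci[j^*]\cup\{c\}\setminus\{g\}\big)>\vali[j_k](\zalloci[j_k])$; (v) the Line~20 guard that triggered this branch reads $(2+\delta_1)\vali[j^*](\zalloci[j^*])<\vali[j^*](\alloci[j^*])$.

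Given these facts the computation is short, and I would present it as four termwise lower bounds on the bundles that $\hallocs$ changes relative to $\allocs$. For $j_1$: $\vali[j_1](\halloci[j_1])=\vali[j_1](\alloci[j_1])+\vali[j_1](\zalloci[j_2])>2\,\vali[j_1](\alloci[j_1])$ by (ii)–(iii). For $i=2,\dots,k-1$: $\vali[j_i](\halloci[j_i])=\vali[j_i](\alloci[j_i])-\vali[j_i](\zalloci[j_i])+\vali[j_i](\zalloci[j_{i+1}])>\vali[j_i](\alloci[j_i])$ by (iii). For $j_k$: $\vali[j_k](\halloci[j_k])=\vali[j_k](\alloci[j_k])-\vali[j_k](\zalloci[j_k])+\vali[j_k](\zalloci[j^*])>\vali[j_k](\alloci[j_k])$ by (iv). For $j^*$: $\vali[j^*](\halloci[j^*])=\vali[j^*](\alloci[j^*])-\vali[j^*](\zalloci[j^*])>\tfrac{1+\delta_1}{2+\delta_1}\vali[j^*](\alloci[j^*])$ by (v), which in particular is strictly positive so no factor collapses. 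Multiplying these four bounds together with the equalities $\vali(\halloci)=\vali(\alloci)$ for all remaining agents yields $\nashw(\hallocs)^n>\tfrac{2(1+\delta_1)}{2+\delta_1}\prod_{i\in[n]}\vali(\alloci)=\tfrac{2+2\delta_1}{2+\delta_1}\,\nashw(\allocs)^n$, and since $\delta_1=\tfrac{2\delta}{1-\delta}$ one has $\tfrac{2+2\delta_1}{2+\delta_1}=1+\delta$ (the comment on Line~3). Taking $n$‑th roots gives $\nashw(\hallocs)\geq(1+\delta)^{1/n}\nashw(\allocs)$; if $\nashw(\allocs)=0$ the inequality is trivial, so one may assume all $\vali(\alloci)>0$, which is exactly what the product argument uses.

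\textbf{Main obstacle.} The routine part is the telescoping product; the one step that needs care is fact~(iv), namely that after the removal $\vali[j_k](\zalloci[j^*]\setminus\{c\})>\vali[j_k](\zalloci[j_k])$. This rests on the EFX‑infeasibility of $(j_k,\zalloci[j_k])$, which I would justify exactly as in the discussion before Lemma~\ref{lem:y}, being careful that the current $M$ (possibly already altered by $\mathbf{if}$‑branch modifications earlier in this round) still satisfies properties~(a)–(b), so that the claimed better matching $M\setminus\{(j_{k-1},\zalloci[j_k])\}\cup\{(j_k,\zalloci[j_k])\}$ is genuinely a counterexample. The remaining subtleties are purely bookkeeping: confirming pairwise distinctness of $j_1,\dots,j_k,j^*$ so that all the additive splittings are legitimate and $\hallocs$ is a partition, confirming $\zalloci[j_1]=\alloci[j_1]$ because unmatched bundles are untouched, and observing that the degenerate cases $k=1$ and $j^*=j_1$ reduce to the same chain of inequalities once $j_1$'s two prescribed assignments are merged into $\halloci[j_1]=\alloci[j_1]\cup\zalloci[j^*]$, for which $\vali[j_1](\halloci[j_1])>2\vali[j_1](\alloci[j_1])$ still follows from (iv) together with $\zalloci[j_1]=\alloci[j_1]$.
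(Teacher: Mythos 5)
Your proof is correct and follows the same approach as the paper's (the same four termwise bounds, the same telescoping product, and the same $\delta$--$\delta_1$ algebra), only spelled out in more detail and with attention to degenerate cases that the paper's terse proof skips.

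One remark on the parenthetical handling of $j^*=j_1$: the proposed merge $\halloci[j_1]=\alloci[j_1]\cup\zalloci[j^*]$ does not actually work there, since $\zalloci[j^*]\subseteq\alloci[j_1]$ makes the union vacuous and leaves $\zalloci[j^*]$ in both $\halloci[j_1]$ and $\halloci[j_k]$, breaking the partition. Fortunately the case cannot occur: if $\zalloci[j^*]=\zalloci[j_1]$ were the robust demand of $j_k$, then $\zalloci[j_1]$ would be EFX-feasible for $j_k$ (by the robust-demand maximality) and $\vali[j_k](\zalloci[j_1])>\vali[j_k](\zalloci[j_k])$ (from fact~(iv)), so $(j_k,\zalloci[j_1])\in E(G)$; since $j_k$ and $\zalloci[j_1]$ are both unmatched, adding this edge to $M$ would preserve (a) and (b) while increasing $|M|$, contradicting (c). Your merge for $k=1$, however, is a genuine and needed repair of the pseudocode, which as written would overwrite $\halloci[j_1]$ at Line~23 and lose the factor of $2$.
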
	
\begin{proof}
When the execution enters lines 21-26, it holds that $(2+\delta_1)\vali[j^*](\zalloci[j^*])<\vali[j^*](\alloci[j^*])$, which implies that $(2+\delta_1)\vali[j^*](\halloci[j^*])>(1+\delta_1)\vali[j^*](\alloci[j^*]).$ As $\vali[j_i](\zalloci[j_{i}])\leq \vali[j_i](\zalloci[j_{i+1}])$ for $i=2, ..., k-1$ and $\vali[j_k](\zalloci[j_{k}])\leq \vali[j_k](\zalloci[j^*])$, we also have $\vali(\halloci)\geq \vali(\alloci)$ for $i\in [n]\setminus\{j_1,j^*\}$. Furthermore, $\vali[j_1](\halloci[j_1])\geq 2\vali[j_1](\alloci[j_1]).$
Putting the above inequalities together, we have
$$\nashw(\hallocs) =\left(\prod_{i\in[n]}{\vali(\halloci)}\right)^{1/n} \geq \left(\frac{2(1+\delta_1)}{2+\delta_1}\right)^{1/n}\left(\prod_{i\in[n]}{\vali(\alloci)}\right)^{1/n}= (1+\delta)^{1/n}\nashw(\allocs).$$
The last equality is due to the relation of parameters $\delta$ and $\delta_1$ (Line 3).
\end{proof}

We keep running Algorithm~\ref{alg:2}, starting with the initial input allocation $\allocs$, and, every time it outputs an allocation $\hallocs$ with significantly higher Nash welfare than $\allocs$, we let $\allocs\gets\hallocs$ and invoke it again. We stop when Algorithm~\ref{alg:2} outputs an EFX allocation $\yallocs$ for the first time. By Lemma~\ref{lem:y}, this is a $(2+\delta_1)^{1-1/n}$ approximation to the Nash welfare of the current input allocation $\allocs$. As the Nash welfare of the input allocation only improves with time, we get $(2+\delta_1)^{1-1/n}\cdot\rho\cdot\nashw(\yallocs)\ge\optnw(\items)$. Notice that no more than $\Ocomplex{\frac{n\rho}{\delta}}$ executions of the algorithm will be required before Algorithm~\ref{alg:2} outputs EFX allocation $\yallocs$. Indeed, by Lemma~\ref{lem:hat-x}, we know that, after every $n$ updates of $\allocs\gets\hallocs$, the Nash welfare of $\allocs$ increases by a factor of at least $(1+\delta)$, yielding an additive improvement to the Nash welfare of at least $\delta\cdot\nashw(\allocs)\ge\frac{\delta}{\rho}\optnw(\items)$. Thus, if the Algorithm~\ref{alg:2} does not output $\yallocs$, $\nashw(\allocs)$ would become larger than $\optnw(\items)$ after $\Ocomplex{\frac{n\rho}{\delta}}$ updates $\allocs\gets\hallocs$. Setting $\delta=\frac{1}{2n+1}$, and, consequently, $\delta_1=\frac{1}{n}$ we obtain a $2\rho$-approximation\footnote{We get $\left(2+\delta_1\right)^{1-1/n}\cdot\rho$-efficiency for the allocation $\yallocs$, where $(2+\frac{1}{n})^{1-1/n}<2$.} after at most $\Ocomplex{n^2}$ executions of Algorithm~\ref{alg:2}. The following statement summarizes the discussion of this section.


\begin{theorem}
On input a $\rho$-efficient allocation, 
Algorithm~\ref{alg:2} returns a $2\rho$-efficient EFX allocation
 after at most $O(n^2)$ repeated executions.
\end{theorem}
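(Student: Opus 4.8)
The plan is to assemble the final theorem from the three lemmas already established (the carry-over of Lemma~\ref{lem-exist-matching}, Lemma~\ref{lem:y}, and Lemma~\ref{lem:hat-x}) together with a potential-function counting argument that bounds the number of outer repetitions. First I would note that every invocation of Algorithm~\ref{alg:2} terminates: the modified matching step changes neither the set of matched bundles nor $|M\cap M^o|$ nor $|M|$, so the inner repeat-until loop runs at most $n$ times before an item is removed, and at most $m$ items can be removed in total; hence each call either outputs some EFX allocation $\yallocs$ or some improved allocation $\hallocs$. The whole procedure repeatedly feeds $\hallocs$ back as the new input and halts the first time $\yallocs$ is returned.

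Next I would handle the \emph{approximation quality} of the final output. Suppose the input to the very first call is $\rho$-efficient, i.e. $\rho\cdot\nashw(\allocs_{\mathrm{init}})\ge\optnw(\items)$. Each subsequent re-invocation uses an allocation $\hallocs$ with $\nashw(\hallocs)\ge(1+\delta)^{1/n}\nashw(\allocs)\ge\nashw(\allocs)$ by Lemma~\ref{lem:hat-x}, so the Nash welfare of the current input allocation is non-decreasing over the course of the outer loop; in particular it always stays at least $\nashw(\allocs_{\mathrm{init}})\ge\optnw(\items)/\rho$. When the algorithm finally outputs $\yallocs$, Lemma~\ref{lem:y} gives $(2+\delta_1)^{1-1/n}\cdot\nashw(\yallocs)\ge\nashw(\allocs_{\mathrm{final}})\ge\optnw(\items)/\rho$, hence $\yallocs$ is $\bigl((2+\delta_1)^{1-1/n}\rho\bigr)$-efficient. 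Plugging in the choice $\delta=\tfrac{1}{2n+1}$, which forces $\delta_1=\tfrac{1}{n}$ via the relation in Line~3, and using $(2+\tfrac1n)^{1-1/n}<2$, this is a $2\rho$-efficient EFX allocation. It is EFX because the returned $\yallocs$ is a perfect matching in the EFX feasibility graph.

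Finally I would bound the \emph{number of executions}. The key observation is that each $\hallocs$-output multiplies $\nashw(\cdot)$ by at least $(1+\delta)^{1/n}$, so every $n$ consecutive $\hallocs$-outputs multiply it by at least $(1+\delta)$, i.e. increase it additively by at least $\delta\cdot\nashw(\allocs)\ge\frac{\delta}{\rho}\optnw(\items)$. Since $\nashw(\cdot)$ can never exceed $\optnw(\items)$ for \emph{any} allocation of $\items$ (and each $\hallocs$ is an allocation of the full item set $\items$), there can be at most $O(\rho/\delta)$ such blocks of $n$ outputs, hence at most $O(n\rho/\delta)$ re-invocations that return $\hallocs$; adding the final call returning $\yallocs$ gives $O(n\rho/\delta)$ executions total. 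With $\delta=\tfrac{1}{2n+1}$ this is $O(n^2\rho)$, which is $O(n^2)$ for constant $\rho$ (and indeed the constant-factor NSW approximation algorithms of \cite{cole2015approximating,barman2018finding} supply a constant $\rho$). The main subtlety to get right is that the Nash-welfare-monotonicity and the $\optnw(\items)$ upper bound apply uniformly across all the outer iterations — this is exactly what licenses stacking the per-iteration improvements of Lemma~\ref{lem:hat-x} and combining them with the single final application of Lemma~\ref{lem:y}; everything else is bookkeeping with the relation $\tfrac{2+2\delta_1}{2+\delta_1}=1+\delta$.
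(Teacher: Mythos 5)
Your proof is correct and follows essentially the same route as the paper: apply Lemma~\ref{lem:y} to the final call, use the monotonicity of the input Nash welfare guaranteed by Lemma~\ref{lem:hat-x} to transfer the $\rho$-efficiency of the initial allocation, and bound the number of re-invocations by noting that every $n$ updates raise the Nash welfare additively by at least $\frac{\delta}{\rho}\optnw(\items)$ while it can never exceed $\optnw(\items)$. You are slightly more careful than the paper on one point worth noting: the execution count actually comes out to $O(n^2\rho)$, so the stated $O(n^2)$ implicitly treats $\rho$ as a constant (which is the intended regime, since one would supply a constant-factor NSW approximation such as those of Cole--Gkatzelis or Barman et al.).
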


\section{Large markets}
\label{sec:large_market}
In this section, we give an improved guarantee for the Nash social welfare of the EFX allocation $\yallocs$ produced by Algorithm~\ref{alg:1}, if the initial Nash social welfare maximizing allocation $\allocs$ satisfies the large market condition with a parameter $\eps$.

\begin{theorem}
\label{th:large_market}
If the input allocation $\allocs$ satisfies the large market condition with a parameter $\eps$, then Algorithm~\ref{alg:1} outputs a $\left(1+8\sqrt{\eps}\right)$-efficient EFX allocation $\yallocs$.
\end{theorem}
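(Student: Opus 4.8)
The plan is to refine the accounting in Lemma~\ref{lem:half_nsw_max}, exploiting the large market condition to show that no bundle loses more than a $(1+O(\sqrt\eps))$-factor of its value, rather than just a factor of $2$. First I would re-examine the key contradiction argument in the proof of Lemma~\ref{lem:half_nsw_max}: there we assumed $2\cdot\vali[j^*](\zalloci[j^*](t))<\vali[j^*](\alloci[j^*])$ at the first bad time $t$, built the augmenting path $P$ of $M'\cup M^o$ ending at bundle $\zalloci[j^*]$, and obtained a Nash-welfare improvement from the reallocation $\hallocs$. The same construction goes through verbatim if instead we only assume the weaker hypothesis that $\vali[j^*](\zalloci[j^*](t))$ has dropped below $\frac{1}{1+\beta}\vali[j^*](\alloci[j^*])$ for a suitable $\beta=\Theta(\sqrt\eps)$, \emph{provided} the product inequality
\[
\bigl[\vali[j_1](\alloci[j_1])+\vali[j_1](\zalloci[j_1])\bigr]\cdot\bigl[\vali[j_{k+1}](\alloci[j_{k+1}])-\vali[j_{k+1}](\zalloci[j_{k+1}])\bigr]\cdot\prod_{i=2}^{k}\bigl[\vali[j_i](\alloci[j_i])-\vali[j_i](\zalloci[j_i])+\vali[j_i](\zalloci[j_{i+1}])\bigr]
\]
still exceeds $\prod_{i\in J}\vali[j_i](\alloci[j_i])$. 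In the original proof the large factor-$2$ slack on the $(k+1)$-st term alone killed the product; here the slack on that term is only $(1-\frac{1}{1+\beta})=\frac{\beta}{1+\beta}$, so I must recover the rest of the needed gain from the other terms of the telescoping product.

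The crucial point is that each intermediate factor satisfies $\vali[j_i](\alloci[j_i])-\vali[j_i](\zalloci[j_i])+\vali[j_i](\zalloci[j_{i+1}])>\vali[j_i](\alloci[j_i])$ strictly, because $\vali[j_i](\zalloci[j_{i+1}])>\vali[j_i](\zalloci[j_i])$ along an edge of the EFX feasibility graph; and the first factor gives a genuine multiplicative gain of $1+\vali[j_1](\zalloci[j_1])/\vali[j_1](\alloci[j_1])$ since $\zalloci[j_1]=\alloci[j_1]$ is untouched. The idea is to argue that \emph{along the whole history of the algorithm}, the total amount of value that can be ``lost'' from touched bundles (summed appropriately, in logarithmic/relative terms) is controlled: each removal of an item $c$ from a robust-demand bundle removes value at most $\eps\cdot\vali(\alloci)$ by the large market condition (Definition~\ref{def:large_market_allocation}, via Claim~\ref{cl:large_market_equivalence} if needed), so a bundle that drops by a relative factor $\gamma$ must have absorbed $\Omega(\gamma/\eps)$ item-removal steps, and each such step, by the robust-demand property, forces the corresponding augmenting-path argument to produce compensating strict gains on the chain it hangs off of. I would package the resulting balance into a number-theoretic / convexity lemma — as the authors announce in the Techniques section, this is where Karamata's inequality for $\log$ enters: one shows that, subject to the constraint that the individual per-agent relative losses are each $O(\sqrt\eps)$ and their ``budget'' is bounded, the geometric mean $\nashw(\yallocs)$ cannot fall below $(1+8\sqrt\eps)^{-1}\nashw(\allocs)$, the worst case being the majorization-extreme configuration where losses are spread as evenly as the budget allows.

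The main obstacle I anticipate is making the bookkeeping between ``number of item removals charged to a bundle'' and ``relative value drop of that bundle'' tight enough to yield the clean $8\sqrt\eps$ constant rather than a weaker $O(\eps^{1/3})$ or an unspecified $O(\sqrt\eps)$: one has to simultaneously use (i) the large market bound to say each removal is small in \emph{absolute} terms relative to the \emph{current} bundle value, not just the initial one, which requires tracking that the bundle value stays within a constant factor throughout, and (ii) the robust-demand / augmenting-path structure to convert each removal into a quantified Nash-welfare improvement that would contradict optimality unless the losses elsewhere are correspondingly large. Balancing these two effects — losses want to be spread out to evade the per-removal smallness, but spreading them out is exactly what the Karamata/majorization step penalizes — is where the $\sqrt\eps$ scaling comes from (it is the geometric mean of the per-step size $\eps$ and the allowed total count), and getting the constant right is the delicate part. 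Once the lemma is in hand, Theorem~\ref{th:large_market} follows immediately by combining it with the EFX and Pareto-optimality guarantees already established for $\yallocs$ in Theorem~\ref{thm:main}.
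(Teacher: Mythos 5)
Your plan correctly identifies the high-level strategy — strengthen the contradiction argument in Lemma~\ref{lem:half_nsw_max} to a $(1+O(\sqrt\eps))$-threshold, set up a majorization comparison, and invoke Karamata's inequality for $\ln(1+x)$ — and you even hit on the right intuition for where the $\sqrt\eps$ comes from (a tension between per-step smallness and total count). But there is a genuine gap: you are missing the single technical device that makes the majorization argument possible, namely \emph{partial} reallocation of bundles along the augmenting path. In Lemma~\ref{lem:half_nsw_max} the whole bundle $\zalloci[j_{i+1}]$ is handed to agent $j_i$; the resulting ``gain'' terms $\vali[j_i](\zalloci[j_{i+1}])-\vali[j_i](\zalloci[j_i])$ are only known to be strictly positive, not quantitatively bounded below, so the plan to ``recover the rest of the needed gain from the other terms of the telescoping product'' stalls — those gains can be arbitrarily tiny. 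The paper instead orders the items of $\zalloci[i+1]$ by the ratio $\vali(g)/\vali[i+1](g)$ and moves only a prefix $R_{i+1}(r_{i+1})$ of them; Claim~\ref{cl:greedy_reallocation} shows the relative gain for $i$ dominates the relative loss for $i+1$. This turns the transfer sizes into tunable real parameters $x_i=\vali(R_i(r_i))/\vali(\zalloci)$, and Claim~\ref{cl:x_spaced} uses the large-market condition precisely to show the achievable $x_i$-values form a $2\eps$-net of $[0,1]$. It is then that one can \emph{choose} the $x_i$'s so that the sequence $b_i=a_i-x_i+x_{i+1}$ is strictly majorized by $(a_i)$, and Karamata delivers the contradiction.

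A second, related divergence: your accounting proposal tracks ``total value lost from touched bundles along the whole history of the algorithm'' and tries to charge each item removal to compensating gains forced by robust demand. The paper does no such global bookkeeping. It only examines the \emph{single} iteration at which some $(1+8\sqrt\eps)\vali(\zalloci)$ first drops below $\vali(\alloci)$, and at that moment constructs one augmenting path and one (fractional) reallocation $\hallocs$ of the \emph{original} allocation $\allocs$ that contradicts Nash-welfare optimality — exactly as in Lemma~\ref{lem:half_nsw_max}, just with the extra freedom of choosing the transfer amounts. History-based charging is not obviously viable here because later removals are driven by the state after earlier removals, and it is unclear how you would couple those removals to a single Pareto/Nash improvement of $\allocs$; the paper's snapshot argument avoids that issue entirely. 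So while your final paragraph correctly anticipates that the $\sqrt\eps$ scaling is delicate, the mechanism you propose for resolving it is not the one that works, and the ingredient you would need (fractional reallocation giving a controllable $2\eps$-granular continuum of moves) is absent from your plan.
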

\begin{proof}
In order to show the stated guarantee, we prove a stronger version of Lemma~\ref{lem:half_nsw_max} where, instead of the guarantee $\vali(\alloci)\le 2\cdot\vali(\yalloci)$, we show that $\vali(\alloci)\le \left(1+8\cdot\sqrt{\eps}\right)\vali(\yalloci)$, for each agent $i\in[n]$. We follow the approach in the proof of Lemma~\ref{lem:half_nsw_max}, with the only difference that we stop at the iteration $t$ of Algorithm~\ref{alg:1} when the condition 
\be
\label{eq:condition_eps}
\forall i\in[n]\quad\quad\left(1+8\cdot\sqrt{\eps}\right)\vali(\zalloci(t))\ge\vali(\alloci).
\ee 
is violated for the first time (instead of the condition $2\cdot\vali(\zalloci(t))\ge\vali(\alloci)$ that we considered in the proof of Lemma~~\ref{lem:half_nsw_max}).

As in Lemma~\ref{lem:half_nsw_max}, if condition~\eqref{eq:condition_eps} is violated for the bundle $\zalloci[j^*]$, we can find an augmenting path $P$ defined in equation~\eqref{eq:augmenting_path} with a set of vertices $\{j_1,\ldots,j_k\}\subset\agents$ and $\{\zalloci[j_1],\ldots,\zalloci[j_{k+1}]\}\subseteq\{\zalloci\}_{i=1}^{n}$ from an unallocated bundle $\zalloci[j_1]$ to the violating bundle $\zalloci[j_{k+1}]=\zalloci[j^*]$. To simplify notation, we assume without loss of generality that $j_i=i$ for each $i\in[k+1]$. Hence, we have
\[
P\bydef\quad\{(1,\zalloci[2]),\ldots,(k,\zalloci[k+1])\}\in M',\quad\quad\{(1,\zalloci[1]),\ldots,(k,\zalloci[k])\}\in M^o.
\]
We recall some key facts about the bundles in the augmenting path $P$. Since $(i,\zalloci[i+1])$ is an edge in the EFX feasibility graph $G$, $\vali(\zalloci[i+1])>\vali(\zalloci)$ for all $i\in[k]$. Since condition \eqref{eq:condition_eps} is violated for the bundle $\zalloci[k+1]$, $\left(1+8\cdot\sqrt{\eps}\right)\vali[k+1](\zalloci[k+1])<\vali[k+1](\alloci[k+1])$. Also, since bundle $\zalloci[1]$ is unallocated, it must be untouched, i.e., $\zalloci[1]=\alloci[1]$. 

Next, we consider how to reallocate items between each consecutive pair of bundles $\zalloci$ and $\zalloci[i+1]$. Unlike in the proof of Lemma~\ref{lem:half_nsw_max}, the reallocation of the entire set $\zalloci[i+1]$ to each agent $i$ does not necessarily increase the Nash social welfare. To obtain an improvement, we have to be more flexible and explore different reallocation schemes from agent $i+1$ to agent $i$. Among many possible schemes, we employ the greedy strategy. Specifically, we rename the items in $\zalloci[i+1]$ so that $\zalloci[i+1]=\{g_1,\ldots,g_\ell\}$ with
\[
\frac{\vali(g_1)}{\vali[i+1](g_1)}\ge\frac{\vali(g_2)}{\vali[i+1](g_2)}\ge\cdots\ge\frac{\vali(g_\ell)}{\vali[i+1](g_\ell)}.
\]
In the greedy scheme, we consider to move the first $r$ items, i.e., the set $R(r)=R_{i+1}(r)\eqdef\{g_1,\ldots,g_r\}$, from agent $i+1$ to agent $i$. The following Claim establishes an important property of any such move: relative increase in agent $i$'s value is greater than the relative decrease in agent $i+1$'s value.

\begin{claim}
\label{cl:greedy_reallocation}
$\frac{\vali(R(r))}{\vali(\zalloci)}\ge\frac{\vali[i+1](R(r))}{\vali[i+1](\zalloci[i+1])}$ for any $r\in\{0,\ldots,\ell\}$. 
\end{claim}

\begin{proof}
As $\vali(\zalloci[i+1])>\vali(\zalloci)$, we have $\frac{\vali(R(r))}{\vali(\zalloci)}\ge\frac{\vali(R(r))}{\vali(\zalloci[i+1])}.$ It remains to show that $\frac{\vali(R(r))}{\vali(\zalloci[i+1])}\ge\frac{\vali[i+1](R(r))}{\vali[i+1](\zalloci[i+1])}$ or, equivalently, $\vali(R(r))\vali[i+1](\zalloci[i+1])\ge\vali(\zalloci[i+1])\vali[i+1](R(r))$. Let $S=\zalloci[i+1]\setminus R(r)$; then we need to show $\vali(R(r))(\vali[i+1](S)+\vali[i+1](R(r)))\ge(\vali(S)+\vali(R(r)))\vali[i+1](R(r))$ or, equivalently, $x\eqdef\vali(R(r))\vali[i+1](S)-\vali(S)\vali[i+1](R(r))\ge 0$. 
By the definition of $x$, $R(r)$, and the renaming of items in $Z_{i+1}$, we have that $x=\sum_{j=1}^{r}\sum_{s=r+1}^{\ell}\left(\vali(g_j)\vali[i+1](g_s)-\vali(g_s)\vali[i+1](g_j)\right)$.
 Since $\frac{\vali(g_j)}{\vali[i+1](g_j)}\ge\frac{\vali(g_s)}{\vali[i+1](g_s)}$ for any $j\ge r \ge s$, we have $\vali(g_j)\vali[i+1](g_s)-\vali(g_s)\vali[i+1](g_j)\ge 0$. Therefore, $x\ge 0$ 
\end{proof}

Claim~\ref{cl:greedy_reallocation} gives us a powerful tool to balance the valuations of agent $i$ and $i+1$. The major challenge, however, is to {\em simultaneously} adjust the valuations of the agents $i\in[k+1]$ in a way that improves the Nash social welfare of the initial allocation $\allocs$. We consider arbitrary simultaneous reallocation of items between all pairs of agents $i$ and $i+1$, where the first $r_{i+1}$ items are moved from agent $i+1$ to agent $i$ according to the greedy scheme for all $i\in[k]$. 
We can focus only on the changes of the allocations for the first $k+1$ agents, as the remaining bundles $\alloci$ remain unchanged for $i\notin[k+1]$. The new allocation $\hallocs$ for the first $k+1$ agents is as follows.
\[
\hallocs:
\begin{cases}
\halloci[1]  &\eqdef\alloci[1]\cup R_2(r_2) \\
\halloci     &\eqdef\alloci\setminus R_i(r_i)\cup R_{i+1}(r_{i+1}), \quad i\in\{2,...,k\}\\
\halloci[k+1]&\eqdef\alloci[k+1]\setminus R_{k+1}(r_{k+1}) 
\end{cases}
\]
We compare the Nash social welfare of the new allocation $\hallocs$ with that of the initial allocation $\allocs$.
\be
\label{eq:nash_welfare_lm}
\ln \left(\frac{\nashw(\hallocs)^n}{\nashw(\allocs)^n}\right)=\ln\left(\prod_{i=1}^{k+1}\frac{\vali(\halloci)}{\vali(\alloci)}\right)=
\sum_{i=1}^{k+1}\ln\left(\frac{\vali(\halloci)}{\vali(\zalloci)}\right)-\sum_{i=1}^{k+1}\ln\left(\frac{\vali(\alloci)}{\vali(\zalloci)}\right).
\ee
As every bundle $\zalloci$ is always a subset of the initial bundle $\alloci$, we have $\vali(\alloci)=\vali(\zalloci)+\vali(\alloci\setminus\zalloci)$ and $\frac{\vali(\alloci)}{\vali(\zalloci)}=1+\frac{\vali(\alloci\setminus\zalloci)}{\vali(\zalloci)}$. Let $a_i\eqdef\frac{\vali(\alloci\setminus\zalloci)}{\vali(\zalloci)}$, so $\frac{\vali(\alloci)}{\vali(\zalloci)}=1+a_i$ for each $i\in[k+1]$. To simplify notation, let $R_1(r_1),R_{k+2}(r_{k+2})\eqdef\emptyset$, so that $\halloci=\alloci\setminus R_i(r_i)\cup R_{i+1}(r_{i+1})$ for each $i\in[k+1]$ and $\frac{\vali(\halloci)}{\vali(\zalloci)}=1+\frac{\vali(\alloci\setminus\zalloci)}{\vali(\zalloci)}-\frac{\vali(R_i(r_i))}{\vali(\zalloci)}+\frac{\vali(R_{i+1}(r_{i+1}))}{\vali(\zalloci)}$. Let $x_i\eqdef\frac{\vali(R_i(r_i))}{\vali(\zalloci)}$ for $i\in[k+1]$; then we get
\[
\frac{\vali(\halloci)}{\vali(\zalloci)}=1+a_i-x_i+\frac{\vali(R_{i+1}(r_{i+1}))}{\vali(\zalloci)}\ge
1+a_i-x_i+\frac{\vali[i+1](R_{i+1}(r_{i+1}))}{\vali[i+1](\zalloci[i+1])}=1+a_i-x_i+x_{i+1},
\]
where the inequality follows from Claim~\ref{cl:greedy_reallocation}. Therefore, we can continue equation~\eqref{eq:nash_welfare_lm}: 
\be
\label{eq:logarithms_lemma}
\eqref{eq:nash_welfare_lm}\ge\sum_{i=1}^{k+1}\ln\left[1+a_i-x_i+x_{i+1}\right]-\sum_{i=1}^{k+1}\ln\left[1+a_i\right],
\ee
where $a_1=0$ and $x_1=x_{k+2}=0$. Moreover, we have $8\sqrt{\eps}\cdot\vali(\zalloci)\ge\vali(\alloci\setminus\zalloci)$ by the definition of $a_i$ for each $i\in[k]$ and $\zalloci[k+1]$ is the bundle that violates this inequality. Thus $a_i\bydef\frac{\vali(\alloci\setminus\zalloci)}{\vali(\zalloci)}\le8\sqrt{\eps}$ for every $i\in[k]$ and $a_{k+1}>8\sqrt{\eps}$. Note that we control each $x_i=\frac{\vali(R_i(r_i))}{\vali(\zalloci)}$ for $i\in\{2,...,k+1\}$ by choosing the corresponding number of items $r_i$ to be reallocated to agent $i-1$. Let $\Theta_i=\{\theta_{i,0}<...<\theta_{i,\ell_i}\}$ be the set of real numbers that $x_i$ can take for each $i\in\{2,...,k+1\}$. Then for each such $i$ we have

\begin{claim}
\label{cl:x_spaced}
The set $\Theta_i=\{\theta_{i,0}<...<\theta_{i,\ell_i}\}$ has $\theta_{i,0}=0$, $\theta_{i,\ell_i}=1$ and $|\theta_{i,j}-\theta_{i,j-1}|\le 2\eps$ for all $j$.
\end{claim}
\begin{proof}
Setting $r_i=0$ and $r_i=|\zalloci|$ yields $x_i=0$ and $x_i=1$, respectively. To bound the gap between consecutive $\theta_{i,j-1},\theta_{i,j}$ it suffices to bound $\frac{\vali(g)}{\vali(\zalloci)}\le 2\eps$ for any item $g\in\zalloci$. We have $\vali(\zalloci)\ge\frac{1}{2}\vali(\alloci)$ (by Lemma~\ref{lem:half_nsw_max}). Thus $\frac{\vali(g)}{\vali(\zalloci)}\le 2\frac{\vali(g)}{\vali(\alloci)}\le 2\eps$ by the large market assumption.
\end{proof}
Finally, we show that there are feasible $x_i$'s, so that the. RHS of \eqref{eq:logarithms_lemma} is strictly positive.
\begin{lemma}
\label{lem:sqrt_eps}
For any sequence of numbers $(a_i)_{i=1}^{k+1}$ such that $a_1=0, a_{k+1}\ge8\sqrt{\eps},$ and $a_i\in[0,8\sqrt{\eps}]$ there is a feasible solution $(x_i\in\Theta_i)_{i=1}^{k+2}$ with $x_1=x_{k+2}=0$ such that 
\[\text{RHS }\eqref{eq:logarithms_lemma}=
\sum_{i=1}^{k+1}\ln\left[1+a_i-x_i+x_{i+1}\right]-\sum_{i=1}^{k+1}\ln\left[1+a_i\right]>0
\]
\end{lemma}
\begin{proof} To prove the lemma, we use Karamata's inequality for the concave function $\phi(x)=\ln(1+x)$. Namely, we find a sequence of feasible $x_i$'s such that the ordered sequence $b_{(1)}\ge b_{(2)}\ge ... \ge b_{(k+1)}$ with $b_i=a_i-x_i+x_{i+1}$ for $i\in[k+1]$ is (strictly) majorized by the ordered sequence $a_{(1)}\ge a_{(2)}\ge ... \ge a_{(k+1)}$, i.e., $\sum_{i=1}^s b_i\le \sum_{i=1}^s a_i$ for any $1\le s\le k+1$, with equality for $s=k+1$. Then, $\sum_{i}^{k+1}\phi(b_i)>\sum_{i}^{k+1}\phi(a_i)$ for any strictly concave function $\phi$. 

First, we consider the case when there are consecutive $a_{i-1}$ and $,a_i$ such that $a_i-a_{i-1}>2\eps$. In this case, we set $x_j=0$ for all $j\neq i$ and $x_i=\theta_{i,1},$ i.e., the smallest non zero value in $\Theta_i$. Then almost all terms in both summations in the RHS~of~\eqref{eq:logarithms_lemma} are the same except for 
two terms $\ln(1+a_{i-1}+x_i)+\ln(1+a_i-x_i)$ and $\ln(1+a_{i-1})+\ln(1+a_i)$ in each summation. As $x_i\le 2\eps$, $b_{i-1}=a_{i-1}+x_i$ and $b_i=a_i-x_i$ are closer to each other than $a_{i-1},a_i$ with the same sum $b_{i-1}+b_i=a_{i-1}+a_{i}.$ Therefore, by Karamata's inequality $\ln(1+a_{i-1}+x_i)+\ln(1+a_i-x_i)-\ln(1+a_{i-1})-\ln(1+a_i)>0$ and the proof of the lemma is complete.

So, in the following, we assume that $a_i-a_{i-1}\le 2\eps$ for all $i\in\{2,...,k+1\}$. This means that $a_{i+1}\le 2\eps\cdot i$ and $a_{k+1-i}>8\sqrt{\eps}-2\eps\cdot i$ for any $i\in[k+1]$, as $a_1=0$ and $a_{k+1}>8\sqrt{\eps}$. Let $\ell\eqdef\left\lfloor\frac{2}{\sqrt{\eps}}\right\rfloor$, then $a_1,...,a_{\ell+1}\le 4\sqrt{\eps}\le a_{k-\ell+1},...,a_{k+1}$. First, we find an increasing  sequence of $(x_i)_{i=1}^{\ell+1}$, such that $x_1=0$, $x_{\ell+1}=1$ and each $b_i$ becomes closer to the median $4\sqrt{\eps}$ than $a_i$ (but still less than the median). To this end, we set $x_1=0$, and then, one after another, we set every following $x_{i+1}$ equal to $\argmax_{\theta\in\Theta_{i+1}}\left\{\theta\mid a_i+\theta-x_i\le 4\sqrt{\eps}\right\}$, where $i\in[\ell]$. Note that once the sequence of $x_i$'s reaches $1$ it stays equal to $1$, because maximal element $\bar{\theta}\in\Theta_{i+1}$ is $1$ and it satisfies the condition $a_i+\bar{\theta}-1\le 4\sqrt{\eps}$ for all $i\in[\ell]$. Now, let us verify that sequence $(x_i)_{i=1}^{\ell}$ reaches $1$. Assume to the contrary that $x_i<1$ for $i\le\ell$. That means by Claim~\ref{cl:x_spaced} that each $a_i-x_i+x_{i+1}$ must be at most $2\eps$ far from the median $4\sqrt{\eps}$, as otherwise we would increase $x_{i+1}$ and get closer to the median. Thus $a_i-x_i+x_{i+1}>4\sqrt{\eps}-2\eps$ for all $i\le\ell+1$. Hence,
\begin{multline*}
x_{\ell+1}=(x_2-x_1)+...+(x_{\ell+1}-x_{\ell})>\sum_{i=1}^{\ell}\left(4\sqrt{\eps}-2\eps-a_i\right)\ge
\sum_{i=1}^{\ell}\left(4\sqrt{\eps}-2\eps-2\eps(i-1)\right)\\
=\sqrt{\eps}\ell\cdot\left(4-\sqrt{\eps}(\ell+1)\right)
\ge\sqrt{\eps}\left(\frac{2}{\sqrt{\eps}}-1\right)\cdot\left(4-\sqrt{\eps}\left(\frac{2}{\sqrt{\eps}}+1\right)\right)
=\left(2-\sqrt{\eps}\right)^2\ge 1,
\end{multline*}
where in the second inequality we used $a_i\le 2\eps (i-1)$, and in the third inequality we used $\frac{2}{\sqrt{\eps}}\ge\ell=\left\lfloor\frac{2}{\sqrt{\eps}}\right\rfloor\ge\frac{2}{\sqrt{\eps}}-1$. We get $x_{\ell+1}> 1$, which contradicts $x_{\ell+1}\in\Theta_{\ell+1}$. Now, it is also easy to see that sequence $(x_i)_{i=1}^{\ell+1}$ is non-decreasing: either $x_i$ has already reached $1$ and stays constant, or $a_i-x_i+x_{i+1}=b_i\ge\frac{2}{\sqrt{\eps}}-2\eps\ge a_i$ for any $i\le\ell$. Analogously, we find a decreasing sequence\footnote{The argument mirrors the one for $i\in[\ell+1]$, i.e., we set $x_{k+2}=0$ first and then let each previous $x_{i-1}=\argmax_{\theta\in\Theta_{i-1}}\left\{\theta\mid a_{i-1}-\theta+x_i\ge 4\sqrt{\eps}\right\}$} of $(x_i)_{i=k-\ell+1}^{k+2}$, such that $x_{k-\ell+1}=1$, $x_{k+2}=0$ and each $b_i$ becomes closer to the median $4\sqrt{\eps}$ than $a_i$ (but still larger than $4\sqrt{\eps}$). We set the remaining $x_i$ for $\ell+1 < i< k-\ell+1$ to $1$, which gives us $b_i=a_i$ for $\ell+1 < i < k-\ell+1$ (recall that $x_{\ell+1}=x_{k-\ell+1}=1$). Now, it is easy to see that $\{b_i\}_{i=1}^{k+1}$ is majorized by $\{a_i\}_{i=1}^{k+1}$ which concludes the proof of the lemma by Karamata's inequality. 
\end{proof}
Lemma~\ref{lem:sqrt_eps} implies that Algorithm~\ref{alg:1} must terminate before any quantity  $(1+8\sqrt{\eps})\vali(\zalloci)$ becomes smaller than $\vali(\alloci)$, since otherwise there is an allocation $\hallocs$ with a higher Nash social welfare than the optimal one of $\allocs$. Moreover, for any edge $(i,\zalloci[j])$ in the EFX-feasibility-graph $\vali(\zalloci[j])\ge\vali(\zalloci)$. Hence, Algorithm~\ref{alg:1} for large markets outputs an EFX allocation $\yallocs$, with $(1+8\sqrt{\eps})\nashw(\yallocs)\ge(1+8\sqrt{\eps})\nashw(\zallocs)\ge\nashw(\allocs)=\optnw(\items)$. 
\end{proof}

\section{Discussion and Open Problems}
\label{sec:open}
We believe that our techniques could be used to show interesting interplays of more fairness notions with Nash welfare. For example, starting from the allocation computed by Algorithm 2, one could use the local-search algorithm of Lipton et al.~\cite{lipton2004approximately} to reallocate the removed items and get an EF1 allocation of {\em all} items. To the best of our knowledge, this is the first polynomial-time algorithm for computing a complete EF1 allocation that approximates maximum Nash welfare within a constant.

Still, the problem of whether EFX allocations of {\em all} items exist is widely open. If EFX allocations do not always exist, our proposed solution alleviates the existence issue while providing high efficiency guarantees. But we suspect that there is a monotonicity property that, if true, would not only show that EFX allocations always exist, but also that they are nearly-optimal in terms of Nash welfare. In particular, we suspect that adding an item to an allocation problem (that provably has an EFX allocation) yields another problem that also has an EFX allocation with at least as high Nash welfare as the initial one. Then, our Theorem~\ref{thm:main} would imply not only the existence of EFX allocations for all items, but also that the best among them is $2^{1-1/n}$-efficient. 
\subsection*{Acknowledgments}
Part of this work was done while authors IC and XH were visiting the Institute for Theoretical Computer Science at Shanghai University of Finance and Economics.

\bibliographystyle{ACM-Reference-Format}

\bibliography{ref}


\begin{thebibliography}{34}


\ifx \showCODEN    \undefined \def \showCODEN     #1{\unskip}     \fi
\ifx \showDOI      \undefined \def \showDOI       #1{#1}\fi
\ifx \showISBNx    \undefined \def \showISBNx     #1{\unskip}     \fi
\ifx \showISBNxiii \undefined \def \showISBNxiii  #1{\unskip}     \fi
\ifx \showISSN     \undefined \def \showISSN      #1{\unskip}     \fi
\ifx \showLCCN     \undefined \def \showLCCN      #1{\unskip}     \fi
\ifx \shownote     \undefined \def \shownote      #1{#1}          \fi
\ifx \showarticletitle \undefined \def \showarticletitle #1{#1}   \fi
\ifx \showURL      \undefined \def \showURL       {\relax}        \fi
\providecommand\bibfield[2]{#2}
\providecommand\bibinfo[2]{#2}
\providecommand\natexlab[1]{#1}
\providecommand\showeprint[2][]{arXiv:#2}

\bibitem[\protect\citeauthoryear{Amanatidis, Birmpas, and Markakis}{Amanatidis
  et~al\mbox{.}}{2018}]%
        {ABM18}
\bibfield{author}{\bibinfo{person}{Georgios Amanatidis},
  \bibinfo{person}{Georgios Birmpas}, {and} \bibinfo{person}{Vangelis
  Markakis}.} \bibinfo{year}{2018}\natexlab{}.
\newblock \showarticletitle{Comparing approximate relaxations of
  envy-freeness}. In \bibinfo{booktitle}{\emph{Proceedings of the 27th
  International Joint Conference on Artificial Intelligence (IJCAI)}}.
  \bibinfo{pages}{42--48}.
\newblock


\bibitem[\protect\citeauthoryear{Amanatidis, Markakis, Nikzad, and
  Saberi}{Amanatidis et~al\mbox{.}}{2017}]%
        {AMNS17}
\bibfield{author}{\bibinfo{person}{Georgios Amanatidis},
  \bibinfo{person}{Evangelos Markakis}, \bibinfo{person}{Afshin Nikzad}, {and}
  \bibinfo{person}{Amin Saberi}.} \bibinfo{year}{2017}\natexlab{}.
\newblock \showarticletitle{Approximation algorithms for computing maximin
  share allocations}.
\newblock \bibinfo{journal}{\emph{{ACM} Transactions on Algorithms}}
  \bibinfo{volume}{13}, \bibinfo{number}{4} (\bibinfo{year}{2017}),
  \bibinfo{pages}{52:1--52:28}.
\newblock


\bibitem[\protect\citeauthoryear{Anari, Goel, and Nikzad}{Anari
  et~al\mbox{.}}{2014}]%
        {AGN14}
\bibfield{author}{\bibinfo{person}{Nima Anari}, \bibinfo{person}{Gagan Goel},
  {and} \bibinfo{person}{Afshin Nikzad}.} \bibinfo{year}{2014}\natexlab{}.
\newblock \showarticletitle{Mechanism design for crowdsourcing: An optimal
  1-1/e competitive budget-feasible mechanism for large markets}. In
  \bibinfo{booktitle}{\emph{Proceedings of the 55th {IEEE} Annual Symposium on
  Foundations of Computer Science ({FOCS})}}. \bibinfo{pages}{266--275}.
\newblock


\bibitem[\protect\citeauthoryear{Anari, Mai, Gharan, and Vazirani}{Anari
  et~al\mbox{.}}{2018}]%
        {anari2018nash}
\bibfield{author}{\bibinfo{person}{Nima Anari}, \bibinfo{person}{Tung Mai},
  \bibinfo{person}{Shayan~Oveis Gharan}, {and} \bibinfo{person}{Vijay~V
  Vazirani}.} \bibinfo{year}{2018}\natexlab{}.
\newblock \showarticletitle{Nash social welfare for indivisible items under
  separable, piecewise-linear concave utilities}. In
  \bibinfo{booktitle}{\emph{Proceedings of the 29th Annual ACM-SIAM Symposium
  on Discrete Algorithms (SODA)}}. \bibinfo{pages}{2274--2290}.
\newblock


\bibitem[\protect\citeauthoryear{Anari, Oveis~Gharan, Saberi, and Singh}{Anari
  et~al\mbox{.}}{2017}]%
        {anari2017nash}
\bibfield{author}{\bibinfo{person}{Nima Anari}, \bibinfo{person}{Shayan
  Oveis~Gharan}, \bibinfo{person}{Amin Saberi}, {and} \bibinfo{person}{Mohit
  Singh}.} \bibinfo{year}{2017}\natexlab{}.
\newblock \showarticletitle{Nash social welfare, matrix permanent, and stable
  polynomials}. In \bibinfo{booktitle}{\emph{Proceedings of the 8th Innovations
  in Theoretical Computer Science Conference (ITCS)}}.
  \bibinfo{pages}{36:1--12}.
\newblock


\bibitem[\protect\citeauthoryear{Aziz, Bouveret, Caragiannis, Giagkousi, and
  Lang}{Aziz et~al\mbox{.}}{2018}]%
        {ABCGL18}
\bibfield{author}{\bibinfo{person}{Haris Aziz}, \bibinfo{person}{Sylvain
  Bouveret}, \bibinfo{person}{Ioannis Caragiannis}, \bibinfo{person}{Ira
  Giagkousi}, {and} \bibinfo{person}{J{\'{e}}r{\^{o}}me Lang}.}
  \bibinfo{year}{2018}\natexlab{}.
\newblock \showarticletitle{Knowledge, fairness, and social constraints}. In
  \bibinfo{booktitle}{\emph{Proceedings of the 32nd {AAAI} Conference on
  Artificial Intelligence (AAAI)}}. \bibinfo{pages}{4638--4645}.
\newblock


\bibitem[\protect\citeauthoryear{Aziz and Mackenzie}{Aziz and
  Mackenzie}{2016}]%
        {AM16}
\bibfield{author}{\bibinfo{person}{Haris Aziz} {and} \bibinfo{person}{Simon
  Mackenzie}.} \bibinfo{year}{2016}\natexlab{}.
\newblock \showarticletitle{A discrete and bounded envy-free cake cutting
  protocol for any number of Agents}. In \bibinfo{booktitle}{\emph{Proceedings
  of the 57th Annual Symposium on Foundations of Computer Science ({FOCS})}}.
  \bibinfo{pages}{416--427}.
\newblock


\bibitem[\protect\citeauthoryear{Balkanski and Singer}{Balkanski and
  Singer}{2015}]%
        {BS15}
\bibfield{author}{\bibinfo{person}{Eric Balkanski} {and} \bibinfo{person}{Yaron
  Singer}.} \bibinfo{year}{2015}\natexlab{}.
\newblock \showarticletitle{Mechanisms for fair attribution}. In
  \bibinfo{booktitle}{\emph{Proceedings of the 16th {ACM} Conference on
  Economics and Computation (EC)}}. \bibinfo{pages}{529--546}.
\newblock


\bibitem[\protect\citeauthoryear{Barman and Krishna~Murthy}{Barman and
  Krishna~Murthy}{2017}]%
        {barman2017approximation}
\bibfield{author}{\bibinfo{person}{Siddharth Barman} {and}
  \bibinfo{person}{Sanath~Kumar Krishna~Murthy}.}
  \bibinfo{year}{2017}\natexlab{}.
\newblock \showarticletitle{Approximation algorithms for maximin fair
  division}. In \bibinfo{booktitle}{\emph{Proceedings of the 18th ACM
  Conference on Economics and Computation (EC)}}. \bibinfo{pages}{647--664}.
\newblock


\bibitem[\protect\citeauthoryear{Barman, Krishnamurthy, and Vaish}{Barman
  et~al\mbox{.}}{2018}]%
        {barman2018finding}
\bibfield{author}{\bibinfo{person}{Siddharth Barman},
  \bibinfo{person}{Sanath~Kumar Krishnamurthy}, {and} \bibinfo{person}{Rohit
  Vaish}.} \bibinfo{year}{2018}\natexlab{}.
\newblock \showarticletitle{Finding fair and efficient allocations}. In
  \bibinfo{booktitle}{\emph{Proceedings of the 19th ACM Conference on Economics
  and Computation (EC)}}. \bibinfo{pages}{557--574}.
\newblock


\bibitem[\protect\citeauthoryear{Bertsimas, Farias, and Trichakis}{Bertsimas
  et~al\mbox{.}}{2011}]%
        {bertsimas2011price}
\bibfield{author}{\bibinfo{person}{Dimitris Bertsimas},
  \bibinfo{person}{Vivek~F. Farias}, {and} \bibinfo{person}{Nikolaos
  Trichakis}.} \bibinfo{year}{2011}\natexlab{}.
\newblock \showarticletitle{The price of fairness}.
\newblock \bibinfo{journal}{\emph{Operations Research}} \bibinfo{volume}{59},
  \bibinfo{number}{1} (\bibinfo{year}{2011}), \bibinfo{pages}{17--31}.
\newblock


\bibitem[\protect\citeauthoryear{Bouveret and Lema{\^{\i}}tre}{Bouveret and
  Lema{\^{\i}}tre}{2016}]%
        {BL16}
\bibfield{author}{\bibinfo{person}{Sylvain Bouveret} {and}
  \bibinfo{person}{Michel Lema{\^{\i}}tre}.} \bibinfo{year}{2016}\natexlab{}.
\newblock \showarticletitle{Characterizing conflicts in fair division of
  indivisible goods using a scale of criteria}.
\newblock \bibinfo{journal}{\emph{Autonomous Agents and Multi-Agent Systems}}
  \bibinfo{volume}{30}, \bibinfo{number}{2} (\bibinfo{year}{2016}),
  \bibinfo{pages}{259--290}.
\newblock


\bibitem[\protect\citeauthoryear{Budish}{Budish}{2011}]%
        {budish2011combinatorial}
\bibfield{author}{\bibinfo{person}{Eric Budish}.}
  \bibinfo{year}{2011}\natexlab{}.
\newblock \showarticletitle{The combinatorial assignment problem: Approximate
  competitive equilibrium from equal incomes}.
\newblock \bibinfo{journal}{\emph{Journal of Political Economy}}
  \bibinfo{volume}{119}, \bibinfo{number}{6} (\bibinfo{year}{2011}),
  \bibinfo{pages}{1061--1103}.
\newblock


\bibitem[\protect\citeauthoryear{Caragiannis, Kaklamanis, Kanellopoulos, and
  Kyropoulou}{Caragiannis et~al\mbox{.}}{2012}]%
        {caragiannis2012efficiency}
\bibfield{author}{\bibinfo{person}{Ioannis Caragiannis},
  \bibinfo{person}{Christos Kaklamanis}, \bibinfo{person}{Panagiotis
  Kanellopoulos}, {and} \bibinfo{person}{Maria Kyropoulou}.}
  \bibinfo{year}{2012}\natexlab{}.
\newblock \showarticletitle{The efficiency of fair division}.
\newblock \bibinfo{journal}{\emph{Theory of Computing Systems}}
  \bibinfo{volume}{50}, \bibinfo{number}{4} (\bibinfo{year}{2012}),
  \bibinfo{pages}{589--610}.
\newblock


\bibitem[\protect\citeauthoryear{Caragiannis, Kurokawa, Moulin, Procaccia,
  Shah, and Wang}{Caragiannis et~al\mbox{.}}{2016}]%
        {caragiannis2016unreasonable}
\bibfield{author}{\bibinfo{person}{Ioannis Caragiannis}, \bibinfo{person}{David
  Kurokawa}, \bibinfo{person}{Herv{\'e} Moulin}, \bibinfo{person}{Ariel~D
  Procaccia}, \bibinfo{person}{Nisarg Shah}, {and} \bibinfo{person}{Junxing
  Wang}.} \bibinfo{year}{2016}\natexlab{}.
\newblock \showarticletitle{The unreasonable fairness of maximum Nash welfare}.
  In \bibinfo{booktitle}{\emph{Proceedings of the 17th ACM Conference on
  Economics and Computation (EC)}}. \bibinfo{pages}{305--322}.
\newblock


\bibitem[\protect\citeauthoryear{Chevaleyre, Endriss, Estivie, and
  Maudet}{Chevaleyre et~al\mbox{.}}{2007}]%
        {CEEM07}
\bibfield{author}{\bibinfo{person}{Yann Chevaleyre}, \bibinfo{person}{Ulle
  Endriss}, \bibinfo{person}{Sylvia Estivie}, {and} \bibinfo{person}{Nicolas
  Maudet}.} \bibinfo{year}{2007}\natexlab{}.
\newblock \showarticletitle{Reaching envy-free states in distributed
  negotiation settings}. In \bibinfo{booktitle}{\emph{Proceedings of the 20th
  International Joint Conference on Artificial Intelligence (IJCAI)}}.
  \bibinfo{pages}{1239--1244}.
\newblock


\bibitem[\protect\citeauthoryear{Cole, Devanur, Gkatzelis, Jain, Mai, Vazirani,
  and Yazdanbod}{Cole et~al\mbox{.}}{2017}]%
        {cole2017convex}
\bibfield{author}{\bibinfo{person}{Richard Cole}, \bibinfo{person}{Nikhil~R
  Devanur}, \bibinfo{person}{Vasilis Gkatzelis}, \bibinfo{person}{Kamal Jain},
  \bibinfo{person}{Tung Mai}, \bibinfo{person}{Vijay~V Vazirani}, {and}
  \bibinfo{person}{Sadra Yazdanbod}.} \bibinfo{year}{2017}\natexlab{}.
\newblock \showarticletitle{Convex program duality, fisher markets, and nash
  social welfare}. In \bibinfo{booktitle}{\emph{Proceedings of the 18th ACM
  Conference on Economics and Computation (EC)}}. \bibinfo{pages}{459--460}.
\newblock


\bibitem[\protect\citeauthoryear{Cole and Gkatzelis}{Cole and
  Gkatzelis}{2015}]%
        {cole2015approximating}
\bibfield{author}{\bibinfo{person}{Richard Cole} {and} \bibinfo{person}{Vasilis
  Gkatzelis}.} \bibinfo{year}{2015}\natexlab{}.
\newblock \showarticletitle{Approximating the Nash social welfare with
  indivisible items}. In \bibinfo{booktitle}{\emph{Proceedings of the 47th
  Annual ACM on Symposium on Theory of Computing (STOC)}}.
  \bibinfo{pages}{371--380}.
\newblock


\bibitem[\protect\citeauthoryear{Cole and Tao}{Cole and Tao}{2016}]%
        {cole2016large}
\bibfield{author}{\bibinfo{person}{Richard Cole} {and} \bibinfo{person}{Yixin
  Tao}.} \bibinfo{year}{2016}\natexlab{}.
\newblock \showarticletitle{Large market games with near optimal efficiency}.
  In \bibinfo{booktitle}{\emph{Proceedings of the 17th ACM Conference on
  Economics and Computation (EC)}}. \bibinfo{pages}{791--808}.
\newblock


\bibitem[\protect\citeauthoryear{Eisenberg and Gale}{Eisenberg and
  Gale}{1959}]%
        {EG59}
\bibfield{author}{\bibinfo{person}{E. Eisenberg} {and} \bibinfo{person}{D.
  Gale}.} \bibinfo{year}{1959}\natexlab{}.
\newblock \showarticletitle{Consensus of subjective probabilities: The
  pari-mutuel method}.
\newblock \bibinfo{journal}{\emph{The Annals of Mathematical Statistics}}
  \bibinfo{volume}{30}, \bibinfo{number}{1} (\bibinfo{year}{1959}),
  \bibinfo{pages}{165--168}.
\newblock


\bibitem[\protect\citeauthoryear{Feldman, Gravin, and Lucier}{Feldman
  et~al\mbox{.}}{2016a}]%
        {FGL16}
\bibfield{author}{\bibinfo{person}{Michal Feldman}, \bibinfo{person}{Nick
  Gravin}, {and} \bibinfo{person}{Brendan Lucier}.}
  \bibinfo{year}{2016}\natexlab{a}.
\newblock \showarticletitle{Combinatorial {W}alrasian equilibrium}.
\newblock \bibinfo{journal}{\emph{SIAM J. Comput.}} \bibinfo{volume}{45},
  \bibinfo{number}{1} (\bibinfo{year}{2016}), \bibinfo{pages}{29--48}.
\newblock


\bibitem[\protect\citeauthoryear{Feldman, Immorlica, Lucier, Roughgarden, and
  Syrgkanis}{Feldman et~al\mbox{.}}{2016b}]%
        {feldman2016price}
\bibfield{author}{\bibinfo{person}{Michal Feldman}, \bibinfo{person}{Nicole
  Immorlica}, \bibinfo{person}{Brendan Lucier}, \bibinfo{person}{Tim
  Roughgarden}, {and} \bibinfo{person}{Vasilis Syrgkanis}.}
  \bibinfo{year}{2016}\natexlab{b}.
\newblock \showarticletitle{The price of anarchy in large games}. In
  \bibinfo{booktitle}{\emph{Proceedings of the 48th Annual ACM Symposium on
  Theory of Computing (STOC)}}. \bibinfo{pages}{963--976}.
\newblock


\bibitem[\protect\citeauthoryear{Garg, Hoefer, and Mehlhorn}{Garg
  et~al\mbox{.}}{2018}]%
        {garg2018approximating}
\bibfield{author}{\bibinfo{person}{Jugal Garg}, \bibinfo{person}{Martin
  Hoefer}, {and} \bibinfo{person}{Kurt Mehlhorn}.}
  \bibinfo{year}{2018}\natexlab{}.
\newblock \showarticletitle{Approximating the Nash social welfare with
  budget-additive valuations}. In \bibinfo{booktitle}{\emph{Proceedings of the
  29th Annual ACM-SIAM Symposium on Discrete Algorithms (SODA)}}.
  \bibinfo{pages}{2326--2340}.
\newblock


\bibitem[\protect\citeauthoryear{Ghodsi, HajiAghayi, Seddighin, Seddighin, and
  Yami}{Ghodsi et~al\mbox{.}}{2018}]%
        {ghodsi2018fair}
\bibfield{author}{\bibinfo{person}{Mohammad Ghodsi},
  \bibinfo{person}{MohammadTaghi HajiAghayi}, \bibinfo{person}{Masoud
  Seddighin}, \bibinfo{person}{Saeed Seddighin}, {and} \bibinfo{person}{Hadi
  Yami}.} \bibinfo{year}{2018}\natexlab{}.
\newblock \showarticletitle{Fair allocation of indivisible goods: Improvements
  and generalizations}. In \bibinfo{booktitle}{\emph{Proceedings of the 19th
  ACM Conference on Economics and Computation (EC)}}.
  \bibinfo{pages}{539--556}.
\newblock


\bibitem[\protect\citeauthoryear{Goldman and Procaccia}{Goldman and
  Procaccia}{2014}]%
        {GP14}
\bibfield{author}{\bibinfo{person}{Jonathan~R. Goldman} {and}
  \bibinfo{person}{Ariel~D. Procaccia}.} \bibinfo{year}{2014}\natexlab{}.
\newblock \showarticletitle{Spliddit: unleashing fair division algorithms}.
\newblock \bibinfo{journal}{\emph{SIGecom Exchanges}} \bibinfo{volume}{13},
  \bibinfo{number}{2} (\bibinfo{year}{2014}), \bibinfo{pages}{41--46}.
\newblock


\bibitem[\protect\citeauthoryear{Kurokawa, Procaccia, and Wang}{Kurokawa
  et~al\mbox{.}}{2018}]%
        {procaccia2014fair}
\bibfield{author}{\bibinfo{person}{David Kurokawa}, \bibinfo{person}{Ariel~D.
  Procaccia}, {and} \bibinfo{person}{Junxing Wang}.}
  \bibinfo{year}{2018}\natexlab{}.
\newblock \showarticletitle{Fair enough: Guaranteeing approximate maximin
  shares}.
\newblock \bibinfo{journal}{\emph{J. ACM}}  \bibinfo{volume}{65(2)}
  (\bibinfo{year}{2018}), \bibinfo{pages}{8:1--27}.
\newblock


\bibitem[\protect\citeauthoryear{Lee}{Lee}{2017}]%
        {lee2017apx}
\bibfield{author}{\bibinfo{person}{Euiwoong Lee}.}
  \bibinfo{year}{2017}\natexlab{}.
\newblock \showarticletitle{APX-hardness of maximizing Nash social welfare with
  indivisible items}.
\newblock \bibinfo{journal}{\emph{Inform. Process. Lett.}}
  \bibinfo{volume}{122} (\bibinfo{year}{2017}), \bibinfo{pages}{17--20}.
\newblock


\bibitem[\protect\citeauthoryear{Lipton, Markakis, Mossel, and Saberi}{Lipton
  et~al\mbox{.}}{2004}]%
        {lipton2004approximately}
\bibfield{author}{\bibinfo{person}{Richard~J Lipton},
  \bibinfo{person}{Evangelos Markakis}, \bibinfo{person}{Elchanan Mossel},
  {and} \bibinfo{person}{Amin Saberi}.} \bibinfo{year}{2004}\natexlab{}.
\newblock \showarticletitle{On approximately fair allocations of indivisible
  goods}. In \bibinfo{booktitle}{\emph{Proceedings of the 5th ACM Conference on
  Electronic Commerce (EC)}}. \bibinfo{pages}{125--131}.
\newblock


\bibitem[\protect\citeauthoryear{Moulin}{Moulin}{2003}]%
        {Moulin03}
\bibfield{author}{\bibinfo{person}{H. Moulin}.}
  \bibinfo{year}{2003}\natexlab{}.
\newblock \bibinfo{booktitle}{\emph{Fair Division and Collective Welfare}}.
\newblock \bibinfo{publisher}{MIT Press}.
\newblock


\bibitem[\protect\citeauthoryear{Nash~Jr}{Nash~Jr}{1950}]%
        {nash1950bargaining}
\bibfield{author}{\bibinfo{person}{John~F Nash~Jr}.}
  \bibinfo{year}{1950}\natexlab{}.
\newblock \showarticletitle{The bargaining problem}.
\newblock \bibinfo{journal}{\emph{Econometrica: Journal of the Econometric
  Society}} (\bibinfo{year}{1950}), \bibinfo{pages}{155--162}.
\newblock


\bibitem[\protect\citeauthoryear{Nguyen, Nguyen, Roos, and Rothe}{Nguyen
  et~al\mbox{.}}{2014}]%
        {NNRR14}
\bibfield{author}{\bibinfo{person}{Nhan{-}Tam Nguyen},
  \bibinfo{person}{Trung~Thanh Nguyen}, \bibinfo{person}{Magnus Roos}, {and}
  \bibinfo{person}{J{\"{o}}rg Rothe}.} \bibinfo{year}{2014}\natexlab{}.
\newblock \showarticletitle{Computational complexity and approximability of
  social welfare optimization in multiagent resource allocation}.
\newblock \bibinfo{journal}{\emph{Autonomous Agents and Multi-Agent Systems}}
  \bibinfo{volume}{28}, \bibinfo{number}{2} (\bibinfo{year}{2014}),
  \bibinfo{pages}{256--289}.
\newblock


\bibitem[\protect\citeauthoryear{Nguyen and Rothe}{Nguyen and Rothe}{2014}]%
        {NR14}
\bibfield{author}{\bibinfo{person}{Trung~Thanh Nguyen} {and}
  \bibinfo{person}{J{\"{o}}rg Rothe}.} \bibinfo{year}{2014}\natexlab{}.
\newblock \showarticletitle{Minimizing envy and maximizing average Nash social
  welfare in the allocation of indivisible goods}.
\newblock \bibinfo{journal}{\emph{Discrete Applied Mathematics}}
  \bibinfo{volume}{179} (\bibinfo{year}{2014}), \bibinfo{pages}{54--68}.
\newblock


\bibitem[\protect\citeauthoryear{Plaut and Rougligarden}{Plaut and
  Rougligarden}{2018}]%
        {plaut2018almost}
\bibfield{author}{\bibinfo{person}{Benjamin Plaut} {and} \bibinfo{person}{Tim
  Rougligarden}.} \bibinfo{year}{2018}\natexlab{}.
\newblock \showarticletitle{Almost envy-freeness with general valuations}. In
  \bibinfo{booktitle}{\emph{Proceedings of the 29th Annual ACM-SIAM Symposium
  on Discrete Algorithms (SODA)}}. \bibinfo{pages}{2584--2603}.
\newblock


\bibitem[\protect\citeauthoryear{Steinhaus}{Steinhaus}{1948}]%
        {steinhaus1948problem}
\bibfield{author}{\bibinfo{person}{Hugo Steinhaus}.}
  \bibinfo{year}{1948}\natexlab{}.
\newblock \showarticletitle{The problem of fair division}.
\newblock \bibinfo{journal}{\emph{Econometrica}} \bibinfo{volume}{16},
  \bibinfo{number}{1} (\bibinfo{year}{1948}), \bibinfo{pages}{101--104}.
\newblock


\end{thebibliography}


\end{document}